\newif\iflong
\newif\ifshort
\newcommand{\repeattheorem}[1]{%
  \begingroup
  \renewcommand{\thetheorem}{\ref{#1}}%
  \expandafter\expandafter\expandafter\theorem
  \csname reptheorem@#1\endcsname
  \endtheorem
  \endgroup
}
\xdef\csname reptheorem@#1\endcsname{%
    \unexpanded\expandafter{\BODY}%
  }%
\unskip\label{#1}\endtheorem
\DeclareMathOperator{\tww}{tww}
\DeclareMathOperator{\cw}{cw}
\DeclareMathOperator{\tw}{tw}
\newcommand{\cc}[1]{{\mbox{\textnormal{\textsf{#1}}}}\xspace}  %
\newcommand{\var}{\text{var}}
\newcommand{\cla}{\text{cla}}
\newcommand{\ones}{\text{ones}}
\newcommand{\ol}[1]{\overline{#1}}
\newcommand{\bigoh}{\mathcal{O}}
\newcommand{\W}[1][xxxx]{\text{\normalfont W[#1]}}
\newcommand{\paranp}{\text{\normalfont \cc{paraNP}}}
\newcommand{\NP}{\cc{NP}}
\newcommand{\FPT}{\cc{FPT}}
\newcommand{\BWMC}{\textup{\textsc{BWMC}}\xspace}
\newcommand{\BSAT}{\textup{\textsc{BSAT}}\xspace}
\newcommand{\prof}{\mathtt{Profiles}}
\newtheorem{property}{Property}[theorem]
\def\hy{\hbox{-}\nobreak\hskip0pt} 
\newcommand{\Card}[1]{|#1|}
\newcommand{\SB}{\{\,}%
\newcommand{\SM}{\mid}
\newcommand{\SE}{\,\}}%
\title{Weighted Model Counting with Twin-Width}
\author{Robert Ganian}{Algorithms and Complexity Group, TU Wien,
  Vienna,
  Austria}{rganian@ac.tuwien.ac.at}{https://orcid.org/0000-0002-7762-8045}{support
  by the Austrian Science Fund (FWF, project Y1329)}  \author{Filip
  Pokr\'yvka}{Masaryk University, Brno,
  Czechia}{xpokryvk@fi.muni.cz}{https://orcid.org/0000-0003-1212-4927}{supported
  by the Czech Science Foundation, project no. 20-04567S}
\author{Andr\'e Schidler}{Algorithms and Complexity Group, TU Wien,
  Vienna, Austria}{aschidler@ac.tuwien.ac.at}{}{} \author{Kirill
  Simonov}{Algorithms and Complexity Group, TU Wien, Vienna,
  Austria}{ksimonov@ac.tuwien.ac.at}{}{supported by the Austrian
  Science Fund (FWF, projects Y1329 and P31336)}  \author{Stefan
  Szeider}{Algorithms and Complexity Group, TU Wien, Vienna,
  Austria}{sz@ac.tuwien.ac.at}{https://orcid.org/0000-0001-8994-1656}{supported
  by the Austrian Science Fund (FWF, project P32441) and by the
  Vienna Science and Technology Fund (WWTF, project ICT19-065)}
\authorrunning{R.\ Ganian, F.\ Pokr\'yvka, A.\ Schidler, K.\ Simonov and S.\ Szeider}
\keywords{Weighted model counting, twin-width, parameterized complexity, SAT}
\begin{document}
\maketitle              %

\begin{abstract}
  Bonnet et al.\ (FOCS 2020) introduced the graph invariant twin-width
  and showed that many NP-hard problems are tractable for graphs of
  bounded twin-width, generalizing similar results for other width
  measures, including treewidth and clique-width.  In this paper, we
  investigate the use of twin-width for solving the propositional
  satisfiability problem (SAT) and propositional model counting. We
  particularly focus on Bounded-ones Weighted Model Counting (BWMC),
  which takes as input a CNF formula $F$ along with a bound $k$ and
  asks for the weighted sum of all models with at most $k$ positive
  literals. BWMC generalizes not only SAT but also (weighted) model
  counting.

  We develop the notion of ``signed'' twin-width of CNF formulas and establish that BWMC is fixed-parameter tractable when parameterized by the certified signed twin-width of $F$ plus $k$.
  We show that this result is tight: it is neither possible to drop the bound $k$ nor use the vanilla twin-width instead if one wishes to retain fixed-parameter tractability, even for the easier problem SAT. Our theoretical results are complemented with an empirical evaluation and comparison of signed twin-width on various classes of CNF formulas.
\end{abstract}

\section{Introduction}

In many cases, it is not sufficient to determine whether a propositional formula is satisfiable, but we also need to determine the number of models. This \textsc{Model Counting} problem arises in several areas of artificial intelligence, among others in the
context of probabilistic reasoning\cite{BacchusDalmaoPitassi03,GomesSabharwalSelman09,SangBeameKautz05}, and is often studied in a weighted setting, where each literal has a weight and each model contributes a weight that is equal to the product of the weights of its literals. Here, we will go a step further and consider a natural generalization of these problems called \emph{Bounded-ones Weighted Model Counting} (\BWMC), where we are additionally provided a bound $k$ on the input and are only asked to count models with at most $k$ literals sets to true.

\textsc{Model Counting} is known to be \#P-complete already when all variables have the
weight~1~\cite{Valiant79b} and remains \#P-hard even for monotone 2CNF
formulas and Horn 2CNF formulas~\cite{Roth96}.  Hence standard
\emph{syntactical restrictions} do not suffice to achieve tractability, not even for this restricted case of \BWMC.

A more successful approach for tackling model counting problems is based on \emph{structural   restrictions}, which focus on exploiting the interactions between variables and/or clauses by considering suitable graph representations of the input formula.
  The two most popular graph representations used in this context are the primal graph and the incidence graph\footnote{Definitions are provided in Section~\ref{sec:prelims}.} (sometimes called the variable interaction and variable-clause interaction graphs, respectively)~\cite{SamerSzeider21}. Typically, one aims at identifying structural properties of these graphs---measured by an integer \emph{parameter} $k$---which can be exploited to obtain so-called \emph{fixed-parameter} algorithms for a considered problem, which are algorithms whose worst-case running time is upper-bounded by $f(k)\cdot n^{\bigoh(1)}$ for some computable function $f$ and inputs of size $n$. Within the broader context of \emph{parameterized complexity theory}~\cite{DowneyFellows13,Cyganbook}, we then say that the problem is \emph{fixed-parameter tractable} w.r.t.\ the considered parameter(s).
  
  The arguably most classical results that arise from this
  ``parameterized'' approach to propositional satisfiability and \textsc{Model Counting} are the
  fixed-parameter algorithms w.r.t.\ the treewidth of the
  primal and incidence graphs~\cite{GottlobScarcelloSideri02,SamerSzeider10,SamerSzeider21}. These were then
  followed by the fixed-parameter tractability of the problem w.r.t.\
  the signed clique-width~\cite{FischerMakowskyRavve06} and signed rank-width~\cite{GanianHlinenyObdrzalek13}, as
  well as other results which combine structural restrictions with
  syntactic ones~\cite{GanianRamanujanSzeider17b}; all of these results can be seen as a push
  in the overarching aim of identifying the ``broadest,'' i.e., most
  general properties of graphs that suffice for fixed-parameter
  tractability. It is worth noting that all of these
  algorithmic results can be adapted to also solve
  \BWMC.

Recently, Bonnet et al.~\cite{TWW_I} discovered a fundamental graph parameter called \emph{twin-width} that is based on a novel type of graph decomposition called a \emph{contraction sequence}. They showed that bounded
twin-width generalizes many previously known graph classes for which
first-order model checking (an important meta-problem in computational
logic) is tractable; most notably, it is upper-bounded on graphs of bounded clique-width as well as on planar graphs~\cite{TWW_I}. In this sense, it provides a ``common generalization'' of both of these (otherwise very diverse) notions. In spite of its recent introduction, twin-width has already become the topic of extensive research~\cite{TWW_II,TWW_III,TWW_IV,dreier2022twin}.

\bigskip\noindent\textbf{Contributions.}\quad
With this paper, we embark on investigating the utilization of
twin-width for the propositional satisfiability problem (SAT) and, more generally,
\BWMC. We begin by noting that, similarly to  the case of clique-width, it is impossible to exploit the ``vanilla'' notion of twin-width, even for SAT. Indeed, as will become clear in Section~\ref{sec:lowerb}, neither \BWMC nor SAT is fixed-parameter tractable when parameterized by the vanilla twin-width of their primal or incidence graph representations. 
Hence, inspired by previous work on SAT using clique-width~\cite{FischerMakowskyRavve06} and rank-width~\cite{GanianHlinenyObdrzalek13}, we develop a notion of \emph{signed twin-width} of CNF formulas that is based on the incidence graph representation along with new \emph{bipartite contraction sequences}. 

As our main algorithmic result, we establish the following:

\begin{reptheorem}{main}
\label{the:main}
\BWMC is fixed-parameter tractable when parameterized by $k$ plus the twin-width of a signed contraction sequence provided on the input.
\end{reptheorem}

\smallskip

We show that this result is essentially tight in the sense that  to retain fixed-parameter tractability, it is neither possible to replace the use of ``signed'' twin-width with vanilla twin-width as introduced by Bonnet et al.~\cite{TWW_I}, nor use the primal graph representation, nor drop $k$ as a parameter. Our results are summarized in Table~\ref{tab:th_results}.

\begin{table}[htbp]
  \begin{center}
    \begin{tabular}{@{}l@{\qquad}l@{\qquad}l@{\qquad}l@{}}\toprule
      \textbf{~} & \textbf{Signed twin-width} & \textbf{Vanilla twin-width} & \textbf{Primal twin-width} \\ \midrule
      \textbf{$k$ is parameter} & \FPT (Theorem~\ref{the:main}) & \W[1]-hard (Pr.~\ref{prop:mcc}) & \W[2]-hard (Pr.~\ref{prop:w2}) \\
      \textbf{$k$ is unrestricted} & \paranp-hard (Pr.~\ref{prop:paranp}) & \paranp-hard (Pr.~\ref{prop:paranp}) & \paranp-hard (Pr.~\ref{prop:paranp})\\
     \bottomrule
    \end{tabular}
  \end{center}
  \vspace{-0.2cm}
  \caption{Overview of results. The fixed-parameter tractability applies to \textsc{BWMC}, while all lower bounds hold already for the problem of deciding whether a model exists.}
  \label{tab:th_results}
\end{table}

Apart from establishing our main algorithmic result and the accompanying lower bounds highlighted above, we also prove that incidence graphs of bounded signed twin-width are a strictly more general class than both planar incidence graphs and incidence graphs of bounded signed clique-width. In fact, for the latter case, our proof also yields an improved bound on the vanilla twin-width for graphs of bounded clique-width compared to that of Bonnet et al.~\cite{TWW_I}. 

We complement our theoretical findings with a brief experimental
evaluation, where we compare the signed clique-width, signed twin-width, and treewidth of
CNF formulas. For this comparison, we
utilize SAT encodings that determine the exact value of these
parameters \cite{HeuleSzeider15,Parlak16,SchidlerSzeider22,SamerVeith09}. 
For signed twin-width, we had to adapt the respective encoding to signed graphs
and used the bipartiteness of the incidence graph to compute larger instances
than would be possible with the plain encoding.
Surprisingly, even though the signed twin-width could, in theory, be (at most a constant factor) larger than both treewidth and signed clique-width, in all of the experiments, it turned out to be smaller than these two other parameters.

\ifshort
\noindent
\emph{Statements where proofs are provided in the full version are marked with }$\star$.
\fi

\section{Preliminaries}
\label{sec:prelims}
We assume a basic knowledge of common notions used in graph theory~\cite{Diestel12}.
For an integer $n > 0$ we denote the set $\{1, \ldots, n\}$ by $[n]$.
All graphs considered in this paper are simple. 
For two vertices $u, v \in V(G)$ we denote by $uv$ the edge with endpoints $u$ and $v$.

Following the terminology of~\cite{TWW_I}, we call the following operation a \emph{contraction} of two vertices $u$ and $v$: introduce a new vertex $w$ into the graph whose neighborhood consists of all the neighbors of $u$ and $v$, and remove $u$ and $v$ from the graph. This definition of a contraction is distinguished from the more commonly used term ``edge contraction'' that corresponds to the same operation but requires $u$ and $v$ to be adjacent.

\smallskip
\noindent \textbf{Satisfiability and Weighted Model Counting.}\quad
We consider propositional formulas in conjunctive normal form (CNF),
represented as sets of clauses over a variable set $\var$.  That is, a \emph{literal} is a
(propositional) variable $x$ or a negated variable $\ol{x}$, a
\emph{clause} is a finite set of literals not containing a
complementary pair $x$ and $\ol{x}$, and a \emph{formula} is a finite set
of clauses. We use $\var(F)$ to denote the variables of $F$.

A \emph{truth assignment} (or \emph{assignment}, for short) is a
mapping $\tau:X\rightarrow \{0,1\}$ defined on some set $X$ of variables in a formula $F$.
We extend $\tau$ to literals by setting $\tau(\ol{x})=1-\tau(x)$ for
$x\in X$. 
An assignment
$\tau:X\rightarrow \{0,1\}$ \emph{satisfies} a formula $F$ if
every clause of $F$ contains a literal $z$ such that $\tau(z)=1$. A truth assignment $\tau:\var(F)\rightarrow
\{0,1\}$ that satisfies~$F$ is a \emph{model} of $F$, and let $M(F)$ be the set of all models of $F$. 

Let $w$ be a \emph{weight function} that maps each literal of $F$ to a real. The weight of an assignment $\tau$ is the product over the weights of its literals, i.e., $w(\tau)=\big(\Pi_{v\in\tau^{-1}(1)}(w(v)\big) \cdot \big( \Pi_{v\in\tau^{-1}(0)}(w(\neg v)\big) \big)$. Another property of assignments that will be useful for our considerations is the number of variables set to 1 by the assignment, which we define as $\ones(\tau)=|\tau^{-1}(1)|$. Our main problem of interest is defined as follows:

\begin{center}
\begin{boxedminipage}{0.98 \columnwidth}
\textsc{Bounded-ones Weighted Model Counting (BWMC)}\\[5pt]
\begin{tabular}{p{0.08 \columnwidth} p{0.8 \columnwidth}}
    Input: & A formula $F$ with a weight function $w$ and an integer $k$.\\
Task: & Compute $\sum_{\pi\in M(F) ~ \wedge ~ \ones(\pi)\leq k }w(\pi)$.
\end{tabular}
\end{boxedminipage}
\end{center}

We note that the \textsc{Weighted Model Counting} problem precisely corresponds to \BWMC\ when we set $k=|\var(F)|$ and to \textsc{Model Counting} when we additionally also set each literal to a weight of one. When describing our lower bounds, it will also be useful to consider a simpler decision problem called \textsc{Bounded-ones SAT} (\BSAT), which takes as input a formula $F$ and integer $k$ and asks whether the formula admits a satisfying assignment with at most $k$ variables set to 1. In other words, \BSAT is the restriction of \BWMC\ to the case where each literal has a weight of one and where we only need to decide whether the output is $\geq 1$ or $0$; this problem precisely corresponds to SAT when we set $k=|\var(F)|$.

The (\emph{signed}) \emph{incidence graph} of $F$, denoted here by $G_F$, is an edge-labeled graph defined as follows: $V(G_F)=F\cup \var(F)$ and $E(G_F)=\{ab~|~a\in F\wedge b\in \var(F) \wedge \{b, \ol{b}\}\cap a\neq \emptyset \}$. The edge set is partitioned into the set of \emph{positive} edges $E^+(G_F)=\{ab~|~a\in F\wedge b\in \var(F) \wedge b \in a \}$ and the set of \emph{negative} edges $E^-(G_F)=\{ab~|~a\in F\wedge b\in \var(F) \wedge \ol{b} \in a \}$. We will later also use the \emph{primal graph} of $F$ for comparison; the vertex set of the primal graph is the set of all variables, and two variables $a,b$ are connected by an edge if and only if there is at least one clause containing literals of both variables.

\smallskip
\noindent \textbf{Twin-width.}\quad
Twin-width was introduced by Bonnet, Kim, Thomass\'{e}, and Watrigant~\cite{TWW_I}; in what follows, we recall the basic concepts and notations introduced there.
A \emph{trigraph} is defined by a triple $(V(G), E(G), R(G))$ where $E(G)$ and $R(G)$ are both sets of edges with endpoints in $V(G)$, called (usual) black edges and \emph{red edges}. We call $G$ a \emph{$d$-trigraph} if the maximum degree in a subgraph induced by $R(G)$ is at most $d$. A trigraph is \emph{red-connected} if it is connected and remains connected after removing all black edges.

For a trigraph $G$ and vertices $u, v \in V(G)$, we define the trigraph $G / u,v$ obtained by \emph{contracting} $u, v$ into a single vertex $w$. Specifically, $V(G/u,v) = V(G) \setminus \{u, v\} \cup \{w\}$, $G - \{u, v\} = G/u,v - \{w\}$, and the edges incident to $w$ are as follows: for a vertex $x \in V(G / u, v) \setminus \{w\}$,
\begin{itemize}
    \item $wx \in E(G/u,v)$ if and only if $ux \in E(G)$ and $vx \in E(G)$,
    \item $wx \notin E(G/u,v) \cup R(G/u,v)$ if and only if $ux \notin E(G) \cup R(G)$ and $vx \notin E(G) \cup R(G)$,
    \item $wx \in R(G/u,v)$ otherwise.
\end{itemize}
We say that $G/u,v$ is a \emph{contraction} of $G$, and if both are $d$-trigraphs, $G/u,v$ is a \emph{$d$-contraction} of $G$. A graph $G$ is $d$-collapsible if there exists a sequence of $d$-contractions that contracts~$G$ to a single vertex. The minimum $d$ for which $G$ is $d$-collapsible is the \emph{twin-width} of $G$, denoted $\tww(G)$.

\smallskip
\noindent \textbf{Signed Clique-Width and Clique-Width Expressions.}\quad
In Section~\ref{sec:signed} we draw connections between twin-width and signed clique-width, and here we introduce the basic definitions for the latter. 
For a positive integer $k$, we let a \emph{$k$-graph} be a graph whose vertices
are labeled by $[k]$. For convenience, we consider a graph to be a $k$-graph with all vertices labeled by $1$. 
We call the $k$-graph consisting of exactly one vertex $v$ (say,
labeled by $i$) an initial $k$-graph and denote it by $i(v)$.

The (\emph{signed}) \emph{clique-width} of a signed graph $G$ is the smallest integer $k$ such that
$G$ can be constructed from initial $k$-graphs by means of iterative
application of the following four operations:
\begin{enumerate}
	\item Disjoint union (denoted by $\oplus$);
	\item Relabeling: changing all labels $i$ to $j$ (denoted by $\rho_{i\rightarrow j}$);
	\item Positive edge insertion: adding positive edges from each vertex labeled by $i$
	to each vertex labeled by $j$ ($i\neq j$; denoted by
	$\eta^+_{i,j}$);
	\item Negative edge insertion: adding negative edges from each vertex labeled by $i$
	to each vertex labeled by $j$ ($i\neq j$; denoted by
	$\eta^-_{i,j}$).	
\end{enumerate}
A construction of a $k$-graph $G$ using the above operations can be
represented by an algebraic term composed of $\oplus$,
$p_{i\rightarrow j}$, $\eta^+_{i,j}$ and $\eta^-_{i,j}$ (where $i\neq j$ and $i,j\in
[k]$). Such a term is called a $k$-expression defining $G$, and we often view it as a tree with each node labeled with the appropriate operation.
Conversely, we call the $k$-graph that arises from a \(k\)-expression its \emph{evaluation}.
The \emph{clique-width} of a signed graph $G$ is the smallest integer $k$ such that $G$
can be defined by a $k$-expression, which we then also call a \emph{clique-width expression} of \(G\).

\smallskip
\noindent \textbf{Parameterized Complexity.}\quad
Next, we give a brief and rather informal review of the most important
concepts of parameterized complexity. For an in-depth treatment of the
subject, we refer the reader to other sources
\cite{Cyganbook,DowneyFellows13,FlumGrohe06,Niedermeier06,SamerSzeider21}.
\ifshort ($\star$) \fi

The instances of a parameterized problem can be considered as pairs
$(I,k)$ where~$I$ is the \emph{main part} of the instance and $k$ is
the \emph{parameter} of the instance; the latter is usually a
non-negative integer.  A parameterized problem is
\emph{fixed-parameter tractable} (FPT) if instances $(I,k)$ of size $n$
(with respect to some reasonable encoding) can be solved
in time $f(k)n^c$ where $f$ is a computable function and $c$
is a constant independent of $k$. Such algorithms are called \emph{fixed-parameter algorithms}.

\iflong
To obtain our lower bounds, we will need the notion of a parameterized reduction.
Let $L_1$, $L_2$ be parameterized decision problems.
A \textit{parameterized reduction} (or fpt-reduction) from $L_1$ to $L_2$ is a mapping $P$ from instances of $L_1$ to instances of $L_2$ such that
\begin{enumerate}
  \item $(x,k)$ is a YES-instance if and only if $P(x,k)$ is a YES-instance,
  \item the mapping can be computed by a fixed-parameter algorithm w.r.t.\ parameter $k$, and
  \item there is a computable function $g$ such that $k'\leq g(k)$, where $(x',k')=P(x,k)$.
\end{enumerate}
\fi

All our lower bounds will be obtained already for the simpler \BSAT problem, allowing us to restrict our attention to decision problems. Since this is \NP-complete, the strongest form of parameterized intractability one can establish for a parameterization of \BSAT is \paranp-hardness, which means that the parameterized problem remains \NP-hard even for a fixed constant value of the considered parameter. A weaker notion of intractability is provided by the complexity classes \W[1] or \W[2]; while \W[2] is believed to be a superclass of \W[1], both \W[1]-hardness and \W[2]-hardness exclude fixed-parameter tractability under well-established complexity assumptions~\cite{Cyganbook}. It is perhaps worth noting that the whole W-hierarchy (which contains the complexity classes \W[1], \W[2],\dots,\W[P]) is itself defined through a weighted satisfiability problem that is similar in spirit to \BSAT~\cite{DowneyFellows13}.

\section{The Twin-Width of Signed Graphs}\label{sec:signed}
Motivated by the incidence graph of a formula, we define a \emph{signed graph} to be a graph~$G$ distinguishing two sets of edges $E^+(G)$ and $E^-(G)$ (positive and negative edges) over the same set of vertices $V(G)$. We assume these two sets of edges to be disjoint, i.e., $E^+(G)\cap E^-(G)=\emptyset$, so each pair of vertices $u,v\in V(G)$ is either positively-adjacent, negatively-adjacent or non-adjacent. A signed graph $G$ is bipartite if the set of vertices $V(G)$ can be partitioned into two independent sets.

We now proceed to define the twin-width for signed graphs based on
contractions, analogously to how twin-width is defined for graphs~\cite{TWW_I} defines. First, we have to incorporate red edges in a signed graph, representing ``errors'' that were created during contractions. Let $G$ be a \emph{signed trigraph} which contains, in addition to $E^+(G)$ and $E^-(G)$, an additional set $R(G)$ of edges called the \emph{red edges}. A signed trigraph $G'$ is a contraction of $G$ if there exist vertices $u,v\in V(G)$ and a vertex $w\in V(G')$ such that $G\setminus\{u,v\}=G'\setminus\{w\}$ and all vertices $x\in V(G')\setminus\{w\}$ satisfy the following:

\begin{itemize}
  \item $wx$ is a positive edge if both $ux$ and $vx$ are positive,
  \item $wx$ is a negative edge if both $ux$ and $vx$ are negative,
  \item $w$ and $x$ are non-adjacent if $u$ and $x$ are non-adjacent and $v$ and $x$ are non-adjacent,
  \item otherwise, $wx$ is a red edge.
\end{itemize}

A contraction sequence of a signed graph $G$ is a sequence of signed trigraphs $G=G_n,G_{n-1},\dots,G_1=K_1$, where $G_i$ has $i$ vertices and $G_i$ is a contraction of $G_{i+1}$. A contraction sequence of a signed graph $G$ is called a \emph{$d$-sequence} if each vertex in each signed graph of the sequence has red degree at most $d$. The twin-width of a signed graph is the minimal such $d$ over all contraction sequences of $G$; we call this the \emph{signed twin-width} of $G$. Later on, it will also be useful to compare this signed twin-width to the twin-width of the underlying unsigned graph for $G$; to avoid any ambiguities, we refer to this parameter as the \emph{unsigned twin-width} of $G$.

One useful way of viewing the contraction sequence is through vertex partitions. Let $G$ be a signed graph, and consider a partition $V'$ of $V(G)$. Intuitively, each set of vertices $X\in V'$ corresponds to a vertex in the contracted graph that represents vertices in $V(G)$ that were contracted into $X$. The set of positive, negative, and red edges over $V'$ is then defined as follows.

\begin{itemize}
  \item $X,Y\in V'$ are positively-adjacent, if each vertex $x\in X$ is positively-adjacent to each vertex $y\in Y$.
  \item $X,Y\in V'$ are negatively-adjacent, if each vertex $x\in X$ is negatively-adjacent to each vertex $y\in Y$.
  \item $X,Y\in V'$ are non-adjacent, if each vertex $x\in X$ is non-adjacent to each vertex $y\in Y$.
  \item $X,Y\in V'$ are red-adjacent otherwise.
\end{itemize}

In this way, each signed graph $G_i$ in a contraction sequence of $G$ can be treated as a vertex partition of $V(G)$, where each vertex $v\in V(G_i)$ is a subset of vertices of $V(G)$ that were contracted into $v$.

\smallskip
\noindent \textbf{Bipartite contraction sequences.}\quad
In this section we define bipartite sequences of contractions, which are a core feature in the definition of the signed twin-width of formulas. In this context, they are helpful as they allow us to keep variable and clause vertices separate.

\begin{definition}\label{def_bipartite}
  Let $G$ be a bipartite graph with vertex partition $A,B\subseteq V(G),$ $A\cap B = \emptyset$, $A\cup B = V(G)$.
  A $d$-sequence of $G$ is a bipartite $d$-sequence of $G$, if each contraction in the sequence contracts two vertices from the same part (either $A$ or $B$).
\end{definition}

Observe that each graph in a bipartite $d$-sequence is bipartite and a bipartite $d$-sequence of maximal length ends with $G_2$, where we can no longer contract.
At this point, we can formalize our parameter of interest: 

\begin{definition}
\label{def:signedtwf}
For a formula $F$ with signed incidence graph $G$, let the \emph{signed twin-width} of $F$ be the minimal $d$ such that $G$ has a bipartite $d$-sequence ending in a two-vertex graph.
\end{definition}

Below, we show that the restriction to bipartite contraction sequences is not only natural in the context of incidence graph representations of formulas but also ``essentially harmless'' in terms of the parameter's behavior.

\begin{lemma}\label{lemma_bipartite}
  Let $G$ be a bipartite signed graph $G$ and $G=G_n,\dots,G_1$ be a $d$-sequence. Then a bipartite $(d+2)$-sequence $G=G'_n,\dots,G'_2$ can be computed in linear time.
\end{lemma}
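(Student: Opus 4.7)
The plan is to replay the given $d$-sequence and split each contraction that merges vertices across the bipartition into two same-side contractions. Formally, view each trigraph $G_i$ as a partition $\mathcal{P}_i$ of $V(G)$ (where each vertex of $G_i$ represents the set of original vertices contracted into it), and define its \emph{bipartite refinement} $\mathcal{P}^*_i$ by splitting every part $P \in \mathcal{P}_i$ into $P_A := P \cap A$ and $P_B := P \cap B$, keeping only the non-empty halves. Since $\mathcal{P}^*_n$ is the partition into singletons and $\mathcal{P}^*_1 = \{A, B\}$, the goal is to realize $\mathcal{P}^*_n, \ldots, \mathcal{P}^*_1$ (possibly interleaved with intermediate partitions) as a bipartite contraction sequence $G = G'_n, \ldots, G'_2$.

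To simulate an original contraction of parts $P_u, P_v$ into $P_w = P_u \cup P_v$, I would perform separately (i) the bipartite contraction of $(P_u)_A$ with $(P_v)_A$ whenever both are non-empty, and (ii) the bipartite contraction of $(P_u)_B$ with $(P_v)_B$ whenever both are non-empty. Each original step thus contributes $0$, $1$, or $2$ bipartite contractions. Tracking the number of bichromatic parts (which starts at $0$ in $\mathcal{P}^*_n$, increases by one exactly when a cross-side contraction happens, and decreases by one exactly when two bichromatic parts are merged) shows that the total number of bipartite contractions is exactly $n-2$, matching the target sequence length.

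The main technical step is the red-degree bound. The key observation is that in any bipartite contraction sequence red edges appear only between an $A$-side and a $B$-side part, since each contraction is performed within one side of a bipartite graph. In a clean state $\mathcal{P}^*_i$, the red neighbors of $P_A$ are therefore at most (i) its sibling $P_B$, if present, and (ii) for each red neighbor $Q$ of $P$ in $\mathcal{P}_i$, possibly $Q_B$; this is because uniformity of the edges between $P$ and $Q$ is inherited by the restriction to $P_A$ and $Q_B$. Hence the red degree is at most $d+1$ in clean states. At an intermediate state where $(P_u)_A, (P_v)_A$ have been merged into a vertex $X$ but the $B$-halves are still split, the same argument with $P_w$ in $\mathcal{P}_{i-1}$ bounds the red neighbors of $X$ other than $(P_u)_B$ and $(P_v)_B$ by $d$, giving red degree at most $d+2$ for $X$. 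For every other part, merging two $A$-parts can increase the red degree of each $B$-part by at most one (and leaves $A$-parts unaffected, since red edges cross the bipartition), so the bound $d+2$ holds throughout.

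The construction runs in linear time via one pass over the input sequence, maintaining for each current part its $A/B$ composition and emitting at most two bipartite contractions per original step. The hard part is the intermediate-state analysis in the red-degree bound; without exploiting that bipartiteness forbids same-side red edges one would only obtain the much weaker bound $2d+2$.
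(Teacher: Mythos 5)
Your proposal is correct and follows essentially the same route as the paper's proof: replay the given $d$-sequence, split each contraction whose parts meet both sides of the bipartition into two same-side contractions (the paper's ``double step''), and bound the red degree by observing that same-side parts are never red-adjacent in a bipartite sequence, so a clean state has red degree at most $d+1$ (the sibling half plus the lifted red neighbors) and the intermediate state of a double step costs at most one more. Your partition-refinement phrasing and the explicit count of $n-2$ emitted contractions are presentational differences only.
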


\begin{proof}
Suppose $G=G_n,G_{n-1},\dots,G_1 = K_1$ is a $d$-sequence of $n$-vertex bipartite graph $G$, with vertex partition $A,B$. Each vertex $u$ of $G_i$ is a set of vertices $u(G)$ of $G_n$ that contracted into $u$. Consider the following algorithm which constructs a bipartite $(d+2)$-sequence $G'_i$ along with a mapping $m$ of indices between these sequences, which is necessary due to the fact that there will not be a one-to-one correspondence between trigraphs in sequence $G'$ and in $G$; $m$ will map the index $j$ of the trigraph we are currently processing in $G'$ to a corresponding index $i$ in $G$.

Initially, we set $G'_n = G_n$. For each $G_i$ starting with $i:=n-1$ and ending with $i:=1$, denote $w$ the newly created vertex in $G_i$ and $u,v$ the vertices of $G_{i+1}$ that contracted to $w$.
The set of vertices $w(G)$ can be split into $w_A=w(G)\cap A$ and $w_B=w(G)\cap B$, and the sets $u_A,u_B,v_A,v_B$ are defined analogously. We can proceed depending on the number of empty sets among $u_A,u_B,v_A,v_B$.

\begin{itemize}
  \item Two sets are empty. Then both $u$ and $v$ belong to one part. If they are in the same part, it is safe to contract them as $w$ is also contained in one part. We append $G'_j$ to the contraction sequence, set $m(j)=i$, and proceed to the next trigraph by decrementing both $i$ and $j$.
  Otherwise a contraction is not possible; we do not change the sequence of $G'$ but continue processing the next graph in $G$, decrement $i$.

  \item One set is empty. Without loss of generality, assume $v_B$ is the empty set (if not, swap the labels between $u,v$ or $A,B$). In this case, we construct the graph $G'_j$ where $u_A,v_A$ have been contracted and append it to $G'$, set $m(j)=i$, and proceed to the next trigraph by decrementing both $i$ and $j$.

  \item All four sets are non-empty. We call this operation a \emph{double step}, since two contractions are required. Firstly, we contract the $A$-parts ($u_A$ and $v_A$) to construct a first trigraph $G'_j$. We add this trigraph to the sequence $G'$, set $m(j)=i$ and decrement $j$. Next, we proceed analogously for the new value of $j$: we contract the $B$-parts ($u_B$ and $v_B$), add this new trigraph to the sequence $G'$, and once again set $m(j)=i$. Finally, we proceed to the next trigraph by decrementing both $i$ and $j$.
\end{itemize}

Obviously, the time complexity of the algorithm described above is linear; we merely process the input sequence and at each graph in the sequence we add either zero, one or two graphs into the output sequence.

What remains to be shown is that $G'$ is a bipartite $(d+2)$-sequence. First, it is obviously a bipartite sequence, since at each step we contracted only $A$-parts and $B$-parts of vertices being contracted in sequence of $G$, and since $G_1$ is single vertex $x$, $G'_2$ contains two vertices $x_A$ and $x_B$.
Next, we turn our attention to the red degrees.

\begin{claim}\label{claim_vertex_red_degree_d}
  Let $j\in [n]$ such that $m(j)\neq m(j-1)$ (i.e., $G'_j$ is not an intermediate graph in a double step). Let $x\in V(G_{m(j)})$ be a vertex in a trigraph in the original sequence. Then $x_A$ has red degree at most $d$ in $G'_j\setminus x_B$, and conversely $x_B$ has red degree at most $d$ in $G'_j\setminus x_A$.
\end{claim}

\begin{claimproof}
  We prove the claim for $x_A$ only, since the situation for $x_B$ is fully symmetrical. Suppose $x_A$ has red degree more than $d$ in $G'_j\setminus x_B$. This means that there are at least $d+1$ vertices in the $B$-part of $G'_j\setminus x_B$ that are red-adjacent to $x_A$ (there are no red edges inside the $A$-part and inside the $B$-part).
  Each vertex  $b_i$ in the red neighborhood of $x_A$ is then a $B$-part of some vertex of $G_{m(j)}$; let us denote these vertices as $u_i$. Now, each $u_i$ is red-adjacent to $x$ in $G_{m(j)}$, because $b_i$ is a red neighbor of $x_A$, and by contracting more vertices with $x_A$ and $u_i$, the red edge does not disappear.
  That makes the red-degree of $x$ at least $d+1$ in $G_{m(j)}$, which contradicts the input sequence of $G$ being a $d$-sequence.
\end{claimproof}

As an immediate corollary of Claim~\ref{claim_vertex_red_degree_d}, we obtain unless $G'_j$ is an intermediate graph in a double step, all vertices in $G'_j$ have red degree at most $d+1$.

It remains to check intermediate graphs produced in the middle of double steps.
In this case, there are two contractions in the sequence $G'$, while there is a single corresponding contraction in $G$. Thus, it might happen that after the first contraction in the $A$-part, we introduce a new red edge to a vertex that will be contracted in the second contraction in the $B$-part. For all other vertices, Claim~\ref{claim_vertex_red_degree_d} applies analogously. Furthermore, the same argument also upper-bounds the red degree of the vertices in the $A$-part that are being contracted, but here we must consider the graph $G'_j\setminus\{u_B,v_B\}$ which yields a total upper-bound of $d+2$ on all red degrees in $G'$.
\end{proof}

We remark that twin-width also admits a more general definition over matrices, where its definition on graphs coincides with the use of the matrix definition on the adjacency matrix of the graph~\cite{TWW_I}. In this context, the signed twin-width of a formula $F$ could equivalently be stated as the twin-width of a ``signed'' bipartite adjacency matrix of the (signed) incidence graph, where the edge labels in the graph are represented as different values in the matrix.

\smallskip
\noindent
\textbf{Comparison to Existing Measures.}\quad
Bonnet et al.~\cite{TWW_I} showed that any proper minor-closed graph class has bounded twin-width. Therefore, in particular, planar graphs and graphs of bounded genus have bounded twin-width. While, strictly speaking, this result considers only graphs and not signed graphs, the subsequent work~\cite{TWW_II} provides a way to lift these bounds to signed graphs. $(\star)$

\iflong
Specifically, Theorem 12 of~\cite{TWW_II} implies that for any sparse hereditary class of bounded twin-width (where ``sparse'' here means not containing $K_{t, t}$ as a subgraph for some constant $t$), it holds that the adjacency matrix of any graph in this class is \emph{$d$-grid free}~\cite{TWW_II}, for some constant $d$, which intuitively means that the matrix does not contain a $d\times d$ ``non-uniform'' submatrix. After replacing the ones with one of the two new values (representing the signs in a signed graph), the matrix still does not contain a $d\times d$ ``non-uniform'' submatrix (here, one formalizes this using the notion of \emph{$d$-mixed free} matrices, since here the matrix is non-binary). By Theorem 10 of~\cite{TWW_I}, the twin-width of a $d$-mixed free matrix is bounded, and this precisely matches the signed twin-width of the corresponding signed graph.
\fi
\begin{proposition}[Corollary of Theorem~10~\cite{TWW_I} and Theorem~12~\cite{TWW_II}]
    Let $G$ be a signed graph and $G'$ be the corresponding unsigned graph. If $G'$ has twin-width $d$ and does not contain $K_{t, t}$ as a subgraph then the signed twin-width of $G$ is at most $2^{2^{\bigoh(dt)}}$.
\end{proposition}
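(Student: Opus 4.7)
The plan is to translate the signed twin-width of $G$ into the matrix twin-width of a suitable ternary matrix, and then chain together the matrix-based results from~\cite{TWW_I} and~\cite{TWW_II}. The key conceptual step is that twin-width admits a definition over matrices with entries in a finite alphabet that agrees with graph twin-width when the alphabet is binary; for signed graphs, one uses a ternary alphabet $\{0, +, -\}$ encoding the three possible relations (non-adjacent, positively adjacent, negatively adjacent) between pairs of vertices. Under this correspondence, the signed twin-width of $G$ equals the twin-width of the ternary adjacency matrix $M$ of $G$, so it suffices to upper-bound the latter.

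First I would invoke Theorem~12 of~\cite{TWW_II}: for sparse hereditary classes of bounded twin-width---where sparse means excluding some $K_{t,t}$ as a subgraph---the adjacency matrix of any graph in the class is $d'$-grid free for some $d' = 2^{\bigoh(dt)}$. Applying this to $G'$ yields that the $0/1$ adjacency matrix of $G'$ is $d'$-grid free, which intuitively means that no $d' \times d'$ submatrix has all distinct rows and all distinct columns.

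Next I would argue that relabelling the $1$'s of the adjacency matrix of $G'$ with the symbols $+$ or $-$ according to the sign of the corresponding edge in $G$ preserves the structural property we need: the resulting ternary matrix $M$ is $d'$-mixed free in the sense of~\cite{TWW_I}, up to at most a constant factor in the parameter. The point is that any large mixed block in $M$ must, after collapsing $\{+,-\}$ into $1$, still witness a large block with distinct rows and columns in the binary matrix, contradicting grid-freeness. Finally I would apply Theorem~10 of~\cite{TWW_I}, which bounds the twin-width of a $d'$-mixed free matrix by a single exponential $2^{\bigoh(d')}$. Composing the three steps yields a bound of $2^{\bigoh(d')} = 2^{2^{\bigoh(dt)}}$ on the twin-width of $M$, and thus on the signed twin-width of~$G$.

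The main obstacle I expect is the middle step: verifying that grid-freeness of the binary matrix indeed implies mixed-freeness of the ternary matrix with only a mild (constant-factor) loss in the parameter, rather than an exponential blow-up. This will require carefully unpacking the definitions of \emph{grid} and \emph{mixed} submatrices in~\cite{TWW_I,TWW_II} and showing that any $d'\times d'$ non-uniform ternary block still contains a $d'\times d'$ non-uniform binary block after forgetting signs; everything else is a direct citation of the two existing theorems and the matrix viewpoint on signed twin-width already flagged at the end of the previous subsection.
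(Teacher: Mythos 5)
Your proposal follows exactly the route the paper takes: pass to the (ternary) adjacency-matrix view of signed twin-width, apply Theorem~12 of~\cite{TWW_II} to get grid-freeness of the unsigned adjacency matrix from sparsity plus bounded twin-width, observe that re-labelling the ones by signs preserves the absence of large non-uniform (mixed) divisions, and close with Theorem~10 of~\cite{TWW_I} on mixed-free matrices, yielding the doubly exponential bound. The step you flag as delicate (grid-free $\Rightarrow$ mixed-free after signing) is precisely the one the paper also handles informally --- the short justification being that any mixed zone of the ternary matrix must contain a nonzero entry and hence survives as a witness in the binary matrix --- so the two arguments coincide.
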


\begin{corollary}
\label{cor:planar}
The class of all planar signed graphs has bounded signed twin-width.
\end{corollary}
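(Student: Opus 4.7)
The plan is to derive the corollary directly from the preceding proposition by exhibiting the two hypotheses for the underlying unsigned graph of an arbitrary planar signed graph. First, I would let $G$ be a planar signed graph and consider its underlying unsigned graph $G'$, which is itself planar (erasing signs cannot introduce non-planarity). Since planarity is preserved under taking minors, the class of planar graphs is a proper minor-closed class, and so by the result of Bonnet et al.~\cite{TWW_I} cited at the beginning of this subsection, $G'$ has twin-width bounded by some absolute constant $d_0$.

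Next I would verify the $K_{t,t}$-freeness hypothesis. It is well known that $K_{3,3}$ is not planar (Kuratowski's theorem), so no planar graph contains $K_{3,3}$ as a subgraph; hence $G'$ is $K_{t,t}$-free for $t = 3$. Having established both hypotheses of the proposition with fixed constants $d \le d_0$ and $t = 3$, a direct application yields that the signed twin-width of $G$ is at most $2^{2^{\bigoh(d_0 \cdot 3)}}$, which is an absolute constant independent of $G$. This proves that the whole class of planar signed graphs has bounded signed twin-width.

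There is essentially no obstacle here: the work has been done in the proposition, and the corollary is really just a matter of checking that planar graphs satisfy both of its hypotheses. The only small subtlety worth mentioning is to justify that passing from the signed graph to its underlying unsigned graph preserves planarity and the absence of $K_{t,t}$ as a subgraph, which is immediate from the definitions since deleting the edge labels does not change the edge set or the embedding.
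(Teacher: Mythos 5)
Your proof is correct and follows exactly the route the paper intends: planar graphs form a proper minor-closed class and hence have bounded twin-width by Bonnet et al., they are $K_{3,3}$-subgraph-free since $K_{3,3}$ is non-planar, and the preceding proposition then bounds the signed twin-width by an absolute constant. This matches the paper's (implicit) derivation of the corollary.
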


However, in the above it is crucial that the graph class in question is sparse. For arbitrary graphs one should not expect to derive a signed twin-width bound from a twin-width bound of the unsigned graph. For example, cliques have bounded twin-width, but by assigning signs to edges of the clique one may obtain an arbitrary binary structure. On the other hand, the bounds on twin-width based on other width measures can be extended to bounds on signed twin-width based on a \emph{signed} version of the corresponding width measure. In what follows we show that bounded signed clique-width implies bounded signed twin-width.

\iflong
\begin{proposition}
\fi
\ifshort
\begin{proposition}[$\star$]
\fi
    Let a graph $G$ have a clique-width of $d$. Then $G$ has a twin-width of at most $2d$.
    Moreover, if $G$ is a signed graph with a signed clique-width of $d$, the signed twin-width of $G$ is at most $2d$.
\end{proposition}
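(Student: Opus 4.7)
The plan is to construct a contraction sequence for $G$ directly from a (signed) $d$-expression $\phi$ of $G$ by processing the nodes of its expression tree $T$ in post-order. Writing $V_i(t)$ for the label-$i$ class of the subexpression rooted at $t$, I maintain the invariant that after processing $t$ the current trigraph consists of exactly one ``super-vertex'' per nonempty class $V_i(t)$, together with singletons for every vertex of $V(G)\setminus V(G_t)$. To restore this invariant I do nothing at a leaf or at an edge-insertion node, contract the two super-vertices $V_i$ and $V_j$ at a relabel node $\rho_{i\to j}$, and contract the pairs $(V_i(t_L), V_i(t_R))$ one by one at a disjoint-union node $\oplus$ with children $t_L, t_R$.

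The key structural property driving the red-degree bound is the following \emph{uniformity observation}: at every node $t$, for any label $i$ and any vertex $v\notin V(G_t)$, all members of $V_i(t)$ share the same adjacency type (positive, negative, or non-adjacent) to $v$ in $G$, because they follow identical label trajectories on the path from $t$ to the root and are therefore treated uniformly by every later $\eta^{+}_{a,b}$ or $\eta^{-}_{a,b}$ operation. Consequently, whenever the invariant holds the super-vertex representing $V_i(t)$ has only black or absent edges to the singletons outside $V(G_t)$, so red edges occur only between the at most $d$ super-vertices within $V(G_t)$, bounding each such red degree by $d-1$.

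The main obstacle is controlling the red degree during the intermediate contractions at a $\oplus$-node, where the trigraph temporarily contains up to $2d$ super-vertices. Just before the first merge, every cross-edge between the left and right sides is uniform (by the same label-trajectory argument applied to the two subtrees separately), so each super-vertex still has red degree at most $d-1$. A short case analysis on the four possible edge types shows that when $(V_i(t_L), V_i(t_R))$ is contracted the new super-vertex acquires at most $d-1$ red edges to each side and hence has red degree at most $2d-2$, while every remaining un-contracted $V_j(t_L)$ can at worst see a previously black edge turn red each time a pair on its side is merged; starting from red degree $\leq d-1$ this contributes at most $d-1$ additional red edges before $V_j(t_L)$ is itself merged, still below $2d$. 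Combined with the trivial relabel case (a single contraction of two vertices of red degree $\leq d-1$) this yields the desired bound $\tww(G)\le 2d$. The signed version follows verbatim: the signed contraction rule turns an edge red precisely when the merged pair disagrees with some third vertex in sign or adjacency, and the uniformity claim extends unchanged since $\eta^+_{i,j}$ and $\eta^-_{i,j}$ both act uniformly on entire label classes.
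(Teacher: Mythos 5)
Your proposal follows essentially the same route as the paper's proof: process the clique-width expression bottom-up, maintain the invariant that every label class of each already-processed subexpression is contracted into a single super-vertex, use the uniformity of all later $\eta^{+}/\eta^{-}$ operations on label classes to keep every edge leaving a processed subexpression black, and absorb the temporary blow-up to at most $2d$ super-vertices at a $\oplus$-node (your count of $2d-2$ is in fact slightly sharper than the paper's crude $2d$ bound there). One small slip in the bookkeeping: under a post-order traversal the vertices of an already-processed sibling subtree are no longer singletons, so the invariant should be stated as ``one super-vertex per label class of each maximal processed subtree, and singletons only for the not-yet-visited leaves'' (this is exactly the paper's monotone-subforest formulation); since those sibling super-vertices are still uniformly non-adjacent to everything in the subtree currently being processed, correcting the statement does not affect your red-degree analysis.
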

\ifshort
\begin{proof}[Proof Sketch]
    We process the clique-width expression in a leaves-to-root fashion, during which we maintain the following invariant: all the vertices in the subexpression that have the same label are contracted into a single vertex. Thus, after processing each operation, the subexpression corresponds to at most $d$ vertices in the current graph in the contraction sequence, and no edges that go from the vertices of the subexpression to the outside, are red. Originally, no vertices are contracted, and by the time the whole expression is processed all vertices are contracted into one. It now remains to consider the three types of operations.

    In an edge insertion node, no contractions occur as no vertices change labels. Moreover, the edges introduced correspond to exactly one edge in the current contraction sequence graph, and that edge is black since only vertices that have the same label were contracted together. In a relabeling node $i \to j$, we contract the vertices that represent labels $i$ and $j$ in the current subexpression. No outside edges become red as afterwards in the expression these vertices are treated identically, and inside the subexpression there are at most $d$ vertices, thus the red degree in the subgraph corresponding to the subexpression does not exceed $d$.
Finally, in a disjoint union node, we consecutively contract the vertices corresponding to the same label in the two disjoint parts. Analogously to the previous case, no outside edges become red. Since the whole subexpression corresponds to at most $2d$ vertices at this point, the red degree is at most $2d$ in each of the intermediate graphs.
\end{proof}
\fi
\iflong
\begin{proof}
    We start with showing the first part of the statement. Let $G$ be a graph and $\sigma$ be a $d$-expression defining $G$. We will iteratively construct a $2d$-contraction sequence of $G$ by processing the operations of $\sigma$ in a bottom-up way. More specifically, let us define a \emph{monotone subforest} of $\sigma$ to be a subforest of the expression tree of $\sigma$ that contains all the leaf nodes, and no vertex with a child outside of the subforest. We say that a monotone subforest $F$ is \emph{defined} by nodes $\tau_1$, \ldots, $\tau_\ell$ of $\sigma$ if $F$ is the union of the subtrees of $\tau_1$, \ldots, $\tau_\ell$. We will simultaneously construct a sequence $F_1$, \ldots, $F_t$ of monotone subforests of $\sigma$, a contraction sequence $G_n$, \ldots, $G_1$ of $G$, and a mapping $\pi : [t] \to [n]$ that establishes a correspondence between the two sequences. Intuitively, operations in $\sigma$ correspond to different number of contractions, thus the mapping $\pi$ is required to keep track of the corresponding indices in two sequences. Crucially, we construct the sequences in a way that satisfies the following property.

    \begin{property}\label{property:1}
        Let $j \in [t]$, $F_j$ be defined by nodes $\tau_1$, \ldots, $\tau_\ell$. Then for each $i \in [\ell]$, vertices of $G$ in $\tau_i$ that have the same label are contracted into a single vertex of $G_{\pi(j)}$ that has no other vertices of $G$ contracted into it, and these are all vertices of $G_{\pi(j)}$.
    \end{property}

    As we will see next, one big advantage of a construction satisfying Property~\ref{property:1} is that the latter ensures that no edge added outside of the subforest $F_j$ in $\sigma$ is red in $G_{\pi(j)}$, so all graphs in the contraction sequence corresponding to some $F_j$ by $\pi$ have low red degree.

    \begin{claim}\label{claim:prop_low_deg}
        Let $j \in [t]$, and let $F_j$ satisfy Property~\ref{property:1}. Then $G_{\pi(j)}$ is a $d$-trigraph.
    \end{claim}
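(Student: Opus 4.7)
The plan is to show that every red edge in $G_{\pi(j)}$ must lie between two vertices that correspond to label classes of the \emph{same} defining subtree $\tau_i$, and then use the fact that $\sigma$ is a $d$-expression to conclude that each vertex has at most $d-1 \leq d$ red neighbors. So the main engine of the proof is a structural lemma confining red edges to the interior of each $\tau_i$, after which the red-degree bound is immediate from the label count.

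I would start by unpacking Property~\ref{property:1}: each vertex of $G_{\pi(j)}$ can be identified with a pair $(\tau_i, a)$, namely the contraction of all original vertices lying in $\tau_i$'s subtree whose label at the root of $\tau_i$ is $a$. Recall that an edge $u_1 u_2$ of $G_{\pi(j)}$ is red precisely when the adjacency between the original vertices comprising $u_1$ and those comprising $u_2$ is not uniform in $G$. So the task reduces to proving: if $u_1 \sim (\tau_{i_1}, a_1)$ and $u_2 \sim (\tau_{i_2}, a_2)$ with $i_1 \neq i_2$, then in $G$ either every vertex of $u_1$ is adjacent to every vertex of $u_2$, or none of them are.

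The core argument is this: since $F_j$ is a monotone subforest, the subtrees rooted at $\tau_{i_1}$ and $\tau_{i_2}$ are disjoint, so any edge between $u_1$ and $u_2$ in $G$ must have been introduced by an $\eta^\pm_{p,q}$ operation situated at a node of $\sigma$ that lies above both $\tau_{i_1}$ and $\tau_{i_2}$. I would then establish, by induction on the operations encountered along the path from $\tau_{i_1}$ upward in $\sigma$, the invariant that all original vertices comprising $u_1$ always carry a common current label. The base case is the root of $\tau_{i_1}$, where all these vertices have label $a_1$ by construction; disjoint-union and edge-insertion operations do not change any labels, while a relabeling $\rho_{p \to q}$ acts uniformly on all current label-$p$ vertices and therefore preserves the invariant. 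The symmetric statement holds for $u_2$. Consequently, any edge-insertion above both subtrees either fires on the full bipartite pair $u_1 \times u_2$ or on none of it, so the adjacency is uniform and $u_1 u_2$ is not red. Combined with the fact that at most $d$ distinct labels appear in any node of a $d$-expression, each vertex $u \sim (\tau_i, a)$ has red neighbors only among the at most $d-1$ other label classes in the same $\tau_i$, yielding the $d$-trigraph bound.

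I expect the main obstacle to be stating and sustaining the ``uniform current label'' invariant precisely. The subtle point is that although labels can be rewritten many times as we move from $\tau_{i_1}$ toward the root of $\sigma$, every rewriting acts uniformly on a whole current-label class; so once a set of vertices shares a label at the top of $\tau_{i_1}$, it can never be split by later operations above $\tau_{i_1}$. A clean inductive statement over the operations lying on the $\tau_{i_1}$-to-root path, treated separately from those on the $\tau_{i_2}$-to-root path until they join, should make this transparent without any case explosion.
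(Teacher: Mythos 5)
Your proposal is correct and follows essentially the same route as the paper: confine red edges to the (at most $d$-vertex) subgraphs corresponding to the defining nodes $\tau_i$ by arguing that all edges introduced above the monotone subforest act uniformly on label classes and hence stay black. The paper states the uniformity step in one sentence; your explicit ``common current label'' invariant is just a more detailed justification of that same step.
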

    \begin{claimproof}
        Observe that for each node $\tau_j$, the vertices introduced in the subexpression of $\tau_j$ correspond to a subgraph on at most $d$ vertices in $G_{\pi(j)}$, by Property~\ref{property:1}. Moreover, the edges of $G_{\pi(j)}$ introduced outside of these subexpressions connect all of the vertices that have the same label in the same manner, thus they all remain black in $G_{\pi(j)}$. All the remaining edges are inside the subgraphs corresponding to $\tau_j$, and since there are at most $d$ vertices in each, the red degree does not exceed $d$.
    \end{claimproof}

    Now, we show how the sequences are constructed. We start with setting $G_n$ to $G$, $F_1$ to be the monotone subforest containing precisely the leaves of $\sigma$, and $\pi(1) = n$. $F_1$ is defined by the set of leaves of $\sigma$, so each vertex of $G$ has its own node among those, thus by $G_{\pi(1)} = G_n = G$ Property~\ref{property:1} holds for $F_1$. We proceed by constructing inductively $F_{i + 1}$ from $F_i$ for each $i$ until we obtain $F_t = \sigma$. We may assume w.l.o.g. that $\sigma$ uses only one label at the root node, which immediately implies that by Property~\ref{property:1} the final element in the contraction sequence is the single-vertex graph. Thus it only remains to show that on each step we can extend the contraction sequence such that (i) Property~\ref{property:1} holds for the new monotone subforest $F_{i + 1}$ and (ii) each of the newly added to the contraction sequence graphs has its red degree below $2d$. For the graphs that directly correspond to elements of $\{F_i\}$ via $\pi$ the latter holds automatically by Claim~\ref{claim:prop_low_deg}.

    In order to construct $F_{i + 1}$, we pick an arbitrary node $\tau$ of $\sigma$ such that $\tau \notin F_i$, but all of its children belong to $F_i$. It is easy to see that such a node exists since $F_i \ne \sigma$ and $F_i$ contains all leaves of $\sigma$. We set $F_{i + 1} = \sigma[V(F_i) \cup \{\tau\}]$, clearly $F_{i + 1}$ is a monotone subforest of $\sigma$. Now, consider separately the cases depending on the type of $\tau$. 

    \noindent$\tau = \rho_{x \to y}(\tau')$. Set $\pi(i + 1) = \pi(i) + 1$, we add one new graph $G_{\pi(i) + 1}$ to the contraction sequence. Specifically, by Property~\ref{property:1} in $G_{\pi(i)}$ there is one vertex where all vertices introduced in $\tau'$ that have the label $x$ are contracted, and the same holds for the label $y$.  Construct $G_{\pi(i) + 1}$ by contracting these two vertices. Clearly, Property~\ref{property:1} holds since the vertices that had the label $x$ in $\tau'$ have now the label $y$, and nothing changed for the remaining vertices. Since no intermediate graphs were added, this case is complete.

    \noindent$\tau = \tau^{(1)} \oplus \tau^{(2)}$. For each label $j \in [d]$ such that there are vertices with this label both in $\tau^{(1)}$ and $\tau^{(2)}$, we introduce a new graph in the sequence by contracting the corresponding vertices in $\tau^{(1)}$ and $\tau^{(2)}$ given by Property~\ref{property:1}. We set $\pi(i + 1)$ to be the index of the final graph in the contraction sequence after processing the label $d$. By construction, Property~\ref{property:1} holds for $G_{\pi(i + 1)}$. Analogously to the proof of Claim~\ref{claim:prop_low_deg}, we observe that red edges are constrained to be inside the subgraphs of the subexpressions that define $F_{i + 1}$. Since the size of each such subgraph is at most $d$ by Property~\ref{property:1}, except for the subgraph that corresponds to $\tau$, which contains at most $2d$ vertices in every step of the new part of the contraction sequence, the red degree in each of the added graphs is at most $2d$.

        \noindent$\tau = \nu_{x, y}(\tau')$. We do not add new steps in the contraction sequence, and we set $\pi(i + 1) = \pi(i)$. Since the set of vertices and their labels does not change between $F_i$ and $F_{i + 1}$, Property~\ref{property:1} immediately holds for $F_{i + 1}$.
\end{proof}
\fi

Moreover, twin-width strictly dominates clique-width: for example, the $n \times n$ grid has constant twin-width, but unbounded clique-width as $n$ grows~\cite{TWW_I}. The same holds for signed versions: signed clique-width clearly remains unbounded while the twin-width bound of Bonnet et al.~\cite{TWW_I} holds irrespectively of edge signs.

\begin{proposition}[Theorem 4,~\cite{TWW_I}]
    For every positive integers $d$ and $n$, the $d$-dimensional $n$-grid has twin-width at most $3d$.
    The same holds for the signed twin-width of any orientation of the grid.
\end{proposition}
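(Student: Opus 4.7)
The first part of the statement is precisely Theorem~4 of~\cite{TWW_I}, so the plan is to extend that result to any signing of the grid by re-using the same contraction sequence and verifying that the red-degree bound of $3d$ is preserved.

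Recall that the unsigned sequence sweeps the grid along a chosen coordinate axis and, within each $(d-1)$-dimensional slice, recursively applies the sequence for lower-dimensional grids. At every intermediate trigraph the vertices that are not yet fully merged split into a pristine region and an ``active frontier'' of size $\bigoh(d)$ adjacent to the current processing site; every red edge is incident to this frontier, which yields a red degree of at most $3d$.

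To transfer the bound to signed grids, I would apply the same schedule of contractions to the signed trigraph and maintain the following invariant: for every pair of super-vertices $U, V$ that are non-red-adjacent in the current trigraph, every original grid edge between a vertex of $U$ and a vertex of $V$ carries the same label (positive, negative, or absent). The base case is immediate since each super-vertex starts as a single grid vertex. Inductively, if $u, v$ are merged into $w$ and the edge $wV$ is non-red, then the contraction rules force $uV$ and $vV$ to be non-red of the same type; the inductive hypothesis then implies that all grid edges from $u$ and $v$ to $V$ share a common label, establishing the invariant for $w$. Consequently every red edge produced in the signed sequence already appears in the unsigned one, so the bound $3d$ is inherited from the unsigned case.

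The main subtlety, I expect, is verifying that the contraction sequence of~\cite{TWW_I} is ``sign-oblivious,'' i.e.\ that the choice of which pair to contract is dictated purely by the grid coordinates rather than by any edge relation whose value a signing could alter. The construction in~\cite{TWW_I} is based on coordinate-based slicing, so this property should hold directly, but making it precise via a careful induction is the only step that requires substantive work beyond the invariant argument above.
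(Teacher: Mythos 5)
There is a genuine gap in the transfer step. Your key claim is that ``every red edge produced in the signed sequence already appears in the unsigned one,'' but this is false for general signed graphs, and your invariant does not establish it. The problem is the inductive step: suppose $u$ and $v$ are merged into $w$ and, in the \emph{unsigned} trigraph, both $uV$ and $vV$ are black (fully adjacent), so $wV$ stays black. The inductive hypothesis only gives you that the original edges from $u$'s class to $V$ all carry some common label $\ell_u$ and those from $v$'s class carry some common label $\ell_v$; nothing forces $\ell_u=\ell_v$. If $\ell_u$ is positive and $\ell_v$ is negative, then $wV$ is black in the unsigned run but red in the signed run, so the signed red-edge set is \emph{not} contained in the unsigned one and the red degree bound does not transfer. (If instead you intend the invariant to refer to non-red pairs of the \emph{signed} trigraph, it becomes the definition of the signed contraction and is true but useless: it no longer yields the containment of red edges you need.) Also, the ``sign-obliviousness'' you flag as the main subtlety is not actually an issue --- a contraction sequence is just a fixed schedule of vertex pairs, which can be replayed on any labeling --- so your attention is on the wrong step.

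The paper closes this gap with a stronger input from the reference: Bonnet et al.\ in fact establish the $3d$ bound for the grid with \emph{all edges red} from the outset. That version dominates every signing, because a red edge between two parts in the signed sequence can only arise if there is at least one underlying grid edge between them, and any such pair is already red in the all-red sequence; hence the signed red-edge set is contained in the all-red one and the bound $3d$ follows immediately. If you want to keep your replay-the-sequence strategy, this is the observation you need to invoke; the weaker statement of Theorem~4 for the black grid is not sufficient on its own.
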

\begin{proof}
    It suffices to observe that \cite{TWW_I} shows the bound for the case where all edges of the grid are red, thus the bound immediately transfers to grids with black edges where signs are arbitrary.
\end{proof}

In fact, even mim-width---a more general parameter than clique-width---is unbounded on $n \times n$ grids. In general, however, twin-width and mim-width are incomparable as, e.g., interval graphs have mim-width one and unbounded twin-width~\cite{TWW_II}. While mim-width has also been used to solve variants of SAT~\cite{SaetherTV14}, it does not yield fixed-parameter algorithms for these problems.
\section{Fixed-Parameter Algorithm for BWMC Parameterized by Twin-Width}

For a formula $F$ with signed incidence graph $G$, recall that the \emph{signed twin-width} of $F$ be the minimal red degree over all bipartite contraction sequences of $G$. This section is dedicated to our main technical contribution, which is a fixed-parameter algorithm for \textsc{WMC} parameterized by $k$ plus the signed twin-width of the input formula. 
We note that since a fixed-parameter algorithm for computing contraction sequences is not yet known, we will adopt the assumption that such a sequence is provided in advance.

\repeattheorem{main}

\iflong
\begin{proof}
\fi
\ifshort
\begin{proof}[Proof Sketch]
\fi
We begin by invoking Lemma~\ref{lemma_bipartite} on the input sequence, which constructs a signed bipartite $(\tww(G)+2)$-sequence $G_n,\dots,G_{2}$ of $G$ in linear time. The core of the proof will be a dynamic programming procedure which will proceed along this sequence where, on a high level, for each graph $G_i$ we will compute a record that will allow us to provide an output for the \textsc{WMC} problem once we reach $G_2$. To this end, we will dynamically compute records for $G_n$, $G_{n-1}$,\dots, $G_2$. Our records will consist of a mapping from so-called \emph{profiles} to reals, where each profile will correspond to a set of assignments of $\var(F)$ which ``behave the same'' on the level of $G_i$.

Formally, a profile for a signed trigraph $G_i$ is a tuple of the form $(T,P,M,\ell,Q)$, where:
\begin{itemize}
\item $T\subseteq V(G_i)$ that induce a red-connected subgraph of $G_i$ of size at most $k(d^2+1)$, partitioned into $T_{\cla}\subseteq V_{\cla}(G_i)$ and $T_{\var}\subseteq V_{\var}(G_i)$,
\item $P\subseteq T_{\var}$ where $|P|\leq k$,
\item $M\subseteq P$,
\item $\ell \leq k$, and
\item $Q\subseteq T_{\cla}$.
\end{itemize}

Intuitively, our records will only need to store information about parts of $G_i$ which are red-connected subgraphs, since---as we will see later---the uniformity of black edges allows us to deal with them without dynamic programming. Furthermore, since the number of ``ones'' is bounded by $k$, it will suffice to store information only about local parts of each red-connected subgraph, and for each such subgraph we will consider a separate $T$. For each fixed $T$, we will then keep track of all possible variable-vertices in $T_{\var}$ where positive assignments occur (via $P$), which of these also contain negative assignments (via $M$), how many ones we have used up so far in this choice of $T$ (via $\ell$), and which clause-vertices in $T_{\cla}$ are already satisfied by variables in $T_{\var}$ (via $Q$).

To formalize this intuition, we will use the notion of \emph{realizability}. Consider a profile $(T,P,M,\ell,Q)$ of $G_i$, and let $S^T_{\var}$ be the set of all variables in $V_{\var}(G_n)$ which are contracted to $T_{\var}$, similarly let $S^T_{\cla}$ be the set of all clauses in $V_{\cla}(G_n)$ which are contracted to $T_{\cla}$. We say that $(T,P,M,\ell,Q)$ is \emph{realizable} by an assignment $\nu$ of $S^T_{\var}$ if and only if:
\begin{description}
    \item[R1] for each vertex $u\in P$, at least one variable $v\in V(G_n)$ collapses to $u$ and $\nu(v)=1$,
    \item[R2] for each vertex $u\in M$, at least one variable $v\in V(G_n)$ collapses to $u$ and $\nu(v)=0$,
    \item[R3] for each vertex $u\in P\setminus M$ all variables $v\in V(G_n)$ collapsing to $u$ satisfy $\nu(v)=1$,
    \item[R4] exactly $\ell$ variables $v\in S^T_\var$ satisfy $\nu(v)= 1$.
    \item[R5] if $c\in Q$, then every clause of $S^T_\cla$ collapsing to clause-vertex $c$ is satisfied by $\nu$.
    \item[R6] if $c\notin Q$, then at least one clause of $S^T_\cla$ collapsing to $c$ is unsatisfied by $\nu$.
\end{description}

For each $G_i$, let $\prof(G_i)$ be the set of all profiles of $G_i$. For each profile $\tau=(T,P,M,\ell,Q)\in \prof(G_i)$, let $\alpha(\tau)=\{\nu:S^T_\var\rightarrow \{0,1\}~|~ \tau\text{ is realizable by }\nu\}$ be the set of all assignments which are, intuitively, captured by this profile. Next, we will use $w(\tau)$ to denote the sum of the weights of all these assignments; formally, $w(\tau)=\sum_{\nu\in \alpha(\tau)}w(\nu)$.
Then the \emph{record} for $G_i$, denoted $R_{G_i}$, is the mapping $\prof(G_i)\rightarrow \mathbb{R}$ which maps each profile $\tau\in \prof(G_i)$ to $w(\tau)$.

Denote $t=k(d^2+1)$ as an upper-bound of $|T|$. We start by upper-bounding the size of the records. %
Specifically, we show that $|\prof(G_i)|\leq s_i$ where $s_i$ is defined as $i(d^{2t-2}+1) {{t}\choose{k}} 2^{k+t}(k+1)$. \iflong This follows from bounding the possible choices for each component of the profile and multiplying the bounds. \fi
Using \cite[Lemma 8]{TWW_III}, the number of possible red-connected sets $T$, where $T\subseteq V(G_i)$ and $|T|\leq t$ is upper-bounded by $i(d^{2t-2}+1)$.
There are at most ${t}\choose{k}$ possible choices for the set $P$ since $|P|\leq k$ and $P\subseteq T_\var \subseteq T$. Since $M\subseteq P$, there are $2^k$ possible choices of $M$. For $\ell$, we have $k+1$ possibilities because $0\leq \ell\leq k$. For the last component $Q$, there are at most $2^t$ possible choices since $Q\subseteq T_\cla\subseteq T$.
 
Moreover, for $G_n$ we can construct $R_{G_n}$ as follows: for each $v\in V_\var(G_n)$, we will have two profiles $(\{v\},\{v\},\emptyset,1,\emptyset)$ and $(\{v\},\emptyset,\emptyset,0,\emptyset)$, while for each $v\in V_\cla(G_n)$ we will have a single profile $(\{v\},\emptyset,\emptyset,0,\emptyset)$. The first two profiles will be mapped to $w(\{v\mapsto 1\})$ and $w(\{v\mapsto 0\})$, respectively, while the last profile will be mapped to $0$.

Once we obtain the record for $G_2$, we can output the solution for the \textsc{WMC} instance as follows. First, we observe that if $G_2$ does not contain a red edge, then \emph{every} variable in the input formula $F$ occurs in \emph{every} clause of $F$ in the same way, which implies that $|\cla(F)|\leq 1$ and the instance is trivial. So, without loss of generality we can assume that $G_2$ contains a red edge, and let $\prof'(G_2)$ be the restriction of $\prof(G_2)$ to tuples where the first component $T$ contains precisely two vertices. From the definition of the records, it now follows that $\sum_{\pi\in M(F) ~ \wedge ~ \ones(\pi)\leq k }w(\pi) \quad = \quad \sum_{\tau\in \prof'(G_2)}R_{G_2}(\tau)$.

It remains to show how to compute $w(\tau)$ for each profile $\tau = (T, P, M, \ell, Q)$ of $G_i$ assuming that the record for profiles in $G_{i+1}$ is already computed. There are two possible cases, depending whether $G_i$ is the result of contracting clause vertices or variable vertices of $G_{i+1}$.
In both cases, denote by $x,y\in V(G_{i+1})$ the vertices that are contracted into $z\in V(G_i)$. Let $\tau = (T,P,M,l,Q)$ be a profile for $G_{i+1}$. If $z\not\in T$ then $\tau$ is also a valid profile for $G_{i + 1}$, and we conclude by setting $R_{G_i}(\tau) =R_{G_{i + 1}}(\tau) = w(\tau)$.
Thus, in the following we assume that $z\in T$. %

Let $T_1,\ldots, T_m$ be the red-connected components of $(T\setminus \{z\}) \cup \{x,y\}$ in $G_{i+1}$. Note that $m\leq d+2$ since each component contains either $x$, $y$, or a red neighbor of $z$.
First, let us observe that one of the following two cases must occur: either for every $T_j$ obtained for the current choice of $T$ it holds that $|T_j|\leq k(d^2 + 1)$ (which we call the \emph{Standard Case}), or $|T| = k(d^2 + 1)$ and $m = 1$ and $T_1 = T \setminus\{z\} \cup \{x, y\}$ (referred to as the \emph{Large-Profile Case}).

\smallskip
\noindent \textbf{The Standard Case.}\quad
Here, for each $T_j$ there is at least one and at most ${{|T_j|}\choose{k}} 2^{k+|T_j|}(k+1)$-many profiles in $\prof(G_{i+1})$, and we branch over all such profiles for each $T_j$. We obtain a set of profiles $\tau_1$, \ldots, $\tau_m$ where for each $j \in [m]$, $\tau_j = (T_j, P_j, M_j, \ell_j, Q_j)$ is a profile of $G_{i + 1}$ as $|T_j| \le k(d^2 + 1)$.
We only proceed if the following \emph{consistency conditions} between $\tau$ and $\tau_1, \ldots, \tau_m$ hold. First, we state conditions that are defined outside of $x$, $y$, $z$.
\begin{description}
    \item[CC1] $P \setminus \{z\} = \bigcup_{i = 1}^m P_j \setminus \{x, y\}$.
    \item[CC2] $M \setminus \{z\} = \bigcup_{i = 1}^m M_j \setminus \{x, y\}$.
    \item[CC3] $\ell = \sum_{i = 1}^m \ell_j$.
    \item[CC4] $Q \setminus \{z\} = (\bigcup_{j = 1}^m Q_j \setminus \{x, y\}) \cup B$, where $B$ is the set of clauses in $T_\cla \setminus\{z\}$ satisfied by $T_\var \setminus \{z\} \cup \{x, y\}$ via a black edge in $G_{i + 1}$. Specifically, a clause $c \in T_\cla \setminus \{z\}$ belongs to $B$ if there is a variable $v \in \bigcup_{j = 1}^m P_j$ such that $vc \in E^+(G_{i+1})$, or there is a variable $v \in \bigcup_{j = 1}^m (T_j)_\var \setminus (P \setminus M)$ such that $vc \in E^-(G_{i+1})$.
\end{description}
Moreover, we put special conditions depending on the type of the contraction that obtained $G_i$. Recall that since we consider only bipartite contraction sequences, $x$, $y$, $z$ are all vertices of the same type, either variables or clauses. Let $a, b \in [m]$ be such that $x \in T_a$, $y \in T_b$ ($a$ and $b$ are not necessarily distinct). 
\begin{description}
    \item[CC5] If $x, y, z$ are variable vertices, $z \in P \setminus M$ if and only if $x \in P_a \setminus M_a$ and $y \in P_b \setminus M_b$; $z \notin P$ if and only if $x \notin P_a$ and $y \notin P_b$; otherwise $z \in M$.
    \item[CC6] If $x, y, z$ are clause vertices, $z \in Q$ if and only if $x \in Q_a$ or is satisfied by $T_\var$ via a black edge in $G_{i + 1}$, and $y \in Q_b$ or is satisfied by $T_\var$ via a black edge in $G_{i + 1}$.
\end{description}
We say that $\tau_1$, \ldots, $\tau_m$ are \emph{consistent} with $\tau$ if $T_1$, \ldots, $T_m$ are defined as above, and all of \textsf{CC1--6} hold. Finally, we set
\begin{equation}
R_{G_i}(\tau) = \sum_{\substack{\tau_1, \ldots, \tau_m\\\text{consistent with } \tau}} \prod_{j = 1}^m w(\tau_j).
    \label{eq:wtau}
\end{equation}
Since for each $i \in [m]$, $w(\tau_j)$ is already computed in $R_{G_{i + 1}}(\tau_j)$, we can indeed compute the sum above.

\smallskip
\noindent \textbf{The Large-Profile Case.}\quad
Here, we do not have profiles on $T_1$ in $\prof(G_{i+1})$, as $|T_1| = |T|+1 > k(d^2+1)$. However, we can compute the record for $\tau$ using suitable profiles for $G_{i+1}$ of smaller size. We elaborate below.

We branch over the choice of a ``profile'' $\tau_1 = (T_1, P_1, M_1, \ell_1, Q_1)$ consistent with $\tau$. (Strictly speaking, a profile must have smaller size, but the consistency conditions are defined irrespectively of that.) It is easy to see that there are only constantly many consistent profiles as it only remains to determine how the vertices $x$ and $y$ belong to the sets $P_1$, $M_1$ (if variables are contracted) or $Q_1$ (if clauses are contracted). Let $v$ be the vertex with the maximum distance over the red edges from the set $P_1$ in $T_1$.
By this choice, $T'=T_1\setminus \{v\}$ remains red-connected. We consider profiles of $G_{i+1}$ of the form $(T', P_1,M_1,\ell, Q')$, where  $Q'$ is one of the suitable subsets of $Q_1 \setminus \{v\}$ that will be specified later.

If $v$ is a clause vertex, we only consider $Q' = Q_1\setminus\{v\}$.
We check whether $v$ is satisfied by a black positive edge from $P_1$, or by a negative black edge from $T'_\var\setminus(P_1\setminus M_1)$, or by a red edge.  In the latter case, we check explicitly whether variables that are contracted into the red neighborhood of $v$ satisfy all clauses contracted to $v$ when set to zero. \iflong We will show later that the red edge to $v$ ends in a variable vertex fully assigned to 0 by any assignment realizing the profile.\fi
Next, we check whether all clauses of $v$ are satisfied if and only if $v\in Q_1$. If this equivalence holds, we add to $R_{G_i}(\tau)$ the value $R_{G_{i+1}}(\tau')$, where $\tau'=(T',P,M,\ell,Q\setminus\{v\})$. If the checks above fail for all $\tau_1$, then the profile $\tau$ is not realizable.

If $v$ is a variable vertex, we go through all possible subsets $Q'\subseteq Q_1$ and check whether the profile $\tau'=(T', P,M,\ell, Q')$ is realizable. If so, then we also check whether all clauses in $Q_1\setminus Q'$ are satisfied and all clauses not in $Q_1$ are not satisfied by setting the variables of $v$ to 0. If all conditions hold, we add $R_{G_{i+1}}(\tau')$ to $R_{G_i}(\tau)$ computed so far.

At this point, it remains to argue the correctness of the computation in both cases 
\ifshort ($\star$) \fi
 and provide the claimed bound on the running time.

\iflong
\smallskip
\noindent\textbf{Correctness of the Standard Case.}\quad
 We now show that the value of $R_{G_i}(\tau)$ computed in \eqref{eq:wtau} is equal to $w(\tau)$. Specifically, we show a bijection between assignments $\pi$ of $S^{T}_\var$ that realize $\tau$, and tuples of assignments $(\pi_1, \ldots, \pi_m)$ where for each $i$, $\pi_j$ acts on $S^{T_j}_\var$ and realizes $\tau_j$, for some tuple of profiles $\tau_1$, \ldots, $\tau_m$ consistent with $\tau$.

First, let $\pi$ be an assignment of $S^{T}_\var$ that realizes $\tau = (T, P, M, \ell, Q)$. We construct profiles $\tau_1$, \ldots, $\tau_m$, and assignments $\pi_1$, \ldots, $\pi_m$ in the following way.
Observe that $S^T_\var$ is a disjoint union $\bigsqcup_{i = 1}^m S^{T_j}_\var$, for each $i \in [m]$ define $\pi_j$ to be the restriction $\pi\raisebox{-.5ex}{$|$}_{S^{T_j}_\var}$. Define $\tau_j$ to be the unique profile on $T_j$ such that $\pi_j$ realizes $\tau_j$, for each $i \in [m]$. It remains to show that the profiles $\tau_1$, \ldots, $\tau_m$ defined in this way are consistent with $\tau$.
\begin{description}
    \item[CC1--3] Observe that the sets $T_j \setminus \{x, y\}$ are disjoint and give in union the set $T \setminus \{z\}$. Moreover, for each $v \in S^{T_j}_\var$ by construction it holds that $\pi_j(v) = \pi(v)$. Thus for each $X \in T \setminus\{z\} = \bigcup_{i = 1}^m T_j \setminus\{x, y\}$, treated as a subset of $V(G)$, the assignment $\pi$ on $X$ coincides with $\pi_j$. Since the inclusion of $X$ in $P$, $P_j$, $M$, $M_j$ depends only on the assignment of the variables in $X$, we immediately get that $P \setminus \{z\} = \bigcup_{i = 1}^m P_j \setminus \{x, y\}$ and $M \setminus \{z\} = \bigcup_{i = 1}^m M_j \setminus \{x, y\}$. For the same reason, $\ell = \sum_{i = 1}^m \ell_j$. Therefore, \textsf{CC1--3} hold.
    \item[CC4] Analogously, the sets $(T_j)_\cla \setminus \{x, y\}$ are disjoint and give in union the set $T_\cla \setminus \{z\}$.  We first show that $\bigcup_{j = 1}^m Q_j \setminus \{x, y\}\subset Q \setminus \{z\}$. Let $c$ be a clause that collapses to $Q_j \setminus \{x, y\}$ and is satisfied by a variable $v \in S^{T_j}_\var$ in $\pi_i$. Since $S^{T_j}_\var \subset S^T_\var$ and $\pi(v) = \pi_i(v)$, $c$ is satisfied by $\pi$ as well. Therefore, for each $q \in Q_j \setminus \{x, y\}$, $q \in T \setminus \{z\}$ and all clauses of $q$ are satisfied by $\pi$, thus $q \in Q \setminus \{z\}$.
        Second, similarly $B \subset Q \setminus \{z\}$ since for each $q \in B$, every clause $c \in q$ is satisfied by a variable in $S^T_\var$ in some assignment $\pi_i$, and for each variable $\pi$ picks the same assignment.

        Finally, we show $Q \setminus \{z\} \subset \bigcup_{j = 1}^m Q_j \setminus \{x, y\} \cup B$. Let $q \in T_\cla \setminus\{z\}$ be such that every clause in $q$ is satisfied by $\pi$. If there is a variable $v \in S^T_\var$ such that $v$ collapses to $X \in T_\var \cup \{x, y\} \setminus \{z\}$ in $G_{i + 1}$, $\pi(v) = 1$, and there is a black edge $Xq \in E^+(G_{i + 1})$, then $q \in B \setminus \{z\}$ (analogously if $\pi(v) = 0$ and $Xq \in E^-(G_i)$). Otherwise, no black edge in $G_{i + 1}$ satisfies a clause in $q$, thus every clause $c \in q$ is satisfied via a red edge in $G_{i + 1}$. Consider the red-connected component $T_j$ of $T \setminus \{x, y\}$ that contains $q$. In $G_{i + 1}$, the red neighborhood of $q$ in $T_j$ is the same as in $T$ and the assignments $\pi$ and $\pi_j$ coincide on $S^{T_j}_\var$, thus $q \in Q_j \setminus \{z\}$.
   \item[CC5] Let $x$, $y$, $z$ be variable vertices, otherwise there is nothing to show. Let $z \in P \setminus M$, then each variable in $z$ is assigned to true in $\pi$. Since $z = x \cup y$, each variable in $x$ is assigned to true in $\pi_a$ and each variable in $y$ is assigned to true in $\pi_b$, thus $x \in P_a \setminus M_a$, $y \in P_b \setminus M_b$. The same holds in converse: if $x \in P_a \setminus M_a$ and $y \in P_b \setminus M_b$, for each variable $v \in x \cup y = z$, $\pi(z) = 1$, and thus $z \in P \setminus M$ since $\pi$ realizes $\tau$. Now let $z \notin P$, then no variable in $z$ is assigned to true by $\pi$. Analogously to the above, this is equivalent to $x \notin P_a$, $y \notin P_b$.
   \item[CC6] Let $x$, $y$, $z$ be clause vertices, and let $z \in Q$. Then each clause $c \in z$ is satisfied by $\pi$, meaning that each clause in $x$ and $y$ is satisfied by $\pi$. If there is a black edge from $T_\var$ to $x$ in $G_{i + 1}$ that satisfies $x$ by $\pi$, then the respective vertex $w \in T_j$ satisfies $x$ by $\pi_j$, then $w \in M_j$ or either $w \in P_j \setminus M_j$ or $w \notin P_j$ depending on the direction of the edge. If no black edge satisfies $x$ in $G_{i + 1}$, then every clause of $x$ is satisfied by a red edge to $T_\var$. Since all red edges from $x$ to $T_\var$ are preserved in $T_a$ and $\pi$ agrees with $\pi_a$ on $T_a$, $x$ must belong to $Q_a$. In either case, the forward direction of \textsf{CC6} is satisfied for $x$, and $y$ is treated analogously. In the other direction, if $x \in Q_a$ then each clause of $x$ is satisfied by the partial assignment $\pi_a$ via red edges, otherwise there is a black edge in $G_{i + 1}$ that satisfies all clauses of $x$ simultaneously by some variable in $S^T_\var$. In any case, every clause of $x$ is satisfied by $\pi$, and analogously for $y$.
\end{description}

In the other direction, consider profiles $\tau_1$, \ldots, $\tau_m$ consistent with $\tau$ and assignments $\pi_1$, \ldots, $\pi_m$ where $\pi_i$ acts on $S^{T_i}_\var$ and realizes $\tau_i$ for each $i \in [m]$. Let $\pi$ be the union of the assignments $\pi_1$, \ldots, $\pi_m$, it remains to show that $\pi$ realizes $\tau$. In what follows we verify the properties \textsf{R1--6}.
\begin{description}
    \item[R1--3] For a vertex $w \neq z$ this follows immediately from \textsf{CC1--3}. Let $x$, $y$, $z$ be vertex vertices. The multiset $\pi(z)$ is a union of $\pi(x)$ and $\pi(y)$, and thus \textsf{CC5} ensures that \textsf{R1--3} also hold for $z$.
    \item[R4] \textsf{CC3} immediately  implies that exactly $\ell = \sum_{j = 1}^m \ell_j$ variables in $S^T_\var$ are set to true.
    \item[R5--6] Let $c \in Q$, $c \ne z$. Let $c \in T_j$, by \textsf{CC4}, either $c \in Q_j$ or $c \in B$, i.e. $c$ is satisfied by a black edge from $T_\var \setminus \{z\} \cup \{x, y\}$. In the former case, the fact that $\pi_j$ realizes $\tau_j$ implies that every clause in $c$ is satisfied by $\pi_j$ in $T_j$ by \textsf{R4}. In the other case, w.l.o.g. let the satisfying edge $vc \in E^+(G_{i + 1})$, $v \in T_h$ for some $h \in [m]$ and $v \in P_h$. Since $\pi_h$ realizes $\tau_h$, by \textsf{R1} at least one variable in $v$ is set to true by $\pi_h$ and so by $\pi$, therefore every clause of $c$ is satisfied by this variable in $\pi$. The case $vc \in E^-(G_{i + 1})$ is analogous.

        On the other hand, let $c \ne z$ be not in $Q$, and let $c \in T_j$ for some $j \in [m]$. Then neither $c \in Q_j$ nor $c \in B$. The latter implies that no black edge from $T_\var \setminus \{z\} \cup \{x, y\}$ satisfies $c$, so the assignment $\pi$ does not satisfy any clause of $c$ via a black edge, thus the clauses of $c$ can only be satisfied via red edges to $T \setminus \{x, y\}$ in $G_{i + 1}$. By construction of $T_1$, \ldots, $T_m$, these red edges actually have their endpoints in $T_j$. Therefore, since $c \notin Q_j$, at least one of the clauses in $c$ is not satisfied by $\pi_j$. Since by the above no vertex outside of $T_j$ can satisfy a clause in $c$, the same clause is not satisfied by $\pi$ in $T$.

        Finally, let $c = z$. The arguments proceed similarly to the above except that now $c$ is the union of the clauses in $x$ and $y$, and we use \textsf{CC6} to argue about the status of the clauses. Let $z \in Q$, then both for $x$ and $y$ all the clauses are either satisfied by the partial assignment $\pi_a$ or $\pi_b$ on the respective profile, or all the clauses are satisfied by a black edge. On the other hand, if $z \notin Q$, either $x$ or $y$ is not satisfied by any black edge, and has at least one clause that is not satisfied internally by the partial assignment. Since all the red edges to $T_\var$ are actually in $T_a$ or $T_b$, respectively, this implies that $\pi$ also does not satisfy this clause.
\end{description}

Clearly, the two mappings above form a bijection, thus indeed the assignments $\pi$ that realize $\tau$ are in a one-to-one correspondence with tuples of the assignments $(\pi_1, \ldots, \pi_m)$ where for each $j \in [m]$, $\pi_j$ realizes $\tau_j$ and the profiles $\tau_1$, \ldots, $\tau_m$ are consistent with $\tau$.
Moreover, since in this bijection $\pi$ is a union of $\pi_1$, \ldots, $\pi_m$, it holds that $w(\pi) = \prod_{j = 1}^m w(\pi_j)$.
Thus,
\begin{multline*}
\sum_{\pi \text{ realizes } \tau} w(\pi) =\sum_{\substack{\tau_1, \ldots, \tau_m\\\text{consistent with } \tau}} \sum_{\pi_1 \text{ realizes } \tau_1} \cdots\sum_{\pi_m \text{ realizes } \tau_m} \prod_{j = 1}^m w(\pi_j)\\
= \sum_{\substack{\tau_1, \ldots, \tau_m\\\text{consistent with } \tau}} \prod_{j = 1}^m \left(\sum_{\pi_j \text{ realizes } \tau_j} w(\pi_j)\right) = \sum_{\substack{\tau_1, \ldots, \tau_m\\\text{consistent with } \tau}} \prod_{j = 1}^m w(\tau_j).
\end{multline*}
Since we compute our record $R_{G_i}(\tau)$ as the latter, we have indeed that $w(\tau) = R_{G_i}(\tau)$, 
which shows the correctness of the computation of the record.

\smallskip
\noindent \textbf{Correctness of the Large-Profile Case.}\quad
Since $|T_1|>k(d^2+1)$, we know that the distance from $v$ to the nearest vertex of $P_1$ is at least $3$ as $k(d^2+1)$ is the maximal number of vertices of distance at most $2$ from a set of size at most $k$ in a $d$\hy degree graph. First, let $v$ be a clause vertex. Every variable $u$ red-adjacent to $v$ can not be in $P_1$. This means, every assignment realizing profile $\tau_1$ sets all vertices contracted to $u$ to 0.
If $v$ is a vertex variable, its red\hy distance from $P_1$ is at least $4$ since the graph is bipartite. Therefore, each clause vertex $u$ red\hy adjacent to $v$ is at red-distance at least $3$ from $P_1$ (or $v$ would be closer to $P_1$ which is a contradiction). This means $u$ is not red\hy adjacent to $P_1$. Also all assignments realizing $\tau_1$ set to 0 all variables contracted to $v$ as  $v\not\in P_1$. This way, if $w\in Q_1$ is not yet fully satisfied in a smaller profile without $v$, all its clauses needs to be satisfied by $v$ being assigned to 0 (some of the clauses contracted to $w$ might be already satisfied via a red edge, but such an edge ends in a variable that is always set to 0, so no ambiguity in the assignment). The sets $P_1$ and $M_1$ might be only black-adjacent, so they would either satisfy all clauses contracted to $w$ or none of them. 

\fi

\noindent\textbf{Running time.} The number of profiles for $G_i$ is bounded by $s_i$, and $s_i \leq s_n$. Let $t=k(d^2+1)$ be the upper bound on the size of a profile. To compute the record for a single profile in $G_i$, we iterate through all possible tuples of consistent profiles.%
In the Large-Profile Case, we need to perform checks over all vertices contracted to a clause or a variable vertex, which can be done in time $\bigoh(n^2)$. This could happen at most $\bigoh(2^t)$ times, since we might iterate through all possible subsets of $Q$ in the profile. Thus we obtain the upper bound of $\bigoh(2^t n^2)$ for processing a single profile.
In the Standard Case, the number of tuples of consistent profiles is upper-bounded by $f(k,d)={{t + 1}\choose{k}} \cdot 2^{k+t + 1} \cdot (k+1)^{d+2}$. This holds since $T_1$, \ldots, $T_m$ is fixed for a particular profile $\tau$ of $G_i$, and it only remains to decide the sets $P_j$, $M_j$, $Q_j$, and the values $\ell_j$. There are at most ${t + 1}\choose{k}$ choices for all sets $P_j$ as all of the profiles are disjoint and contain at most $t + 1$ vertices, and at most $k$ of them can belong to any of $P_j$ in total. After fixing $P_j$, there are at most $2^k$ choices for the sets $M_j$ as for each $j \in [m]$, $M_j \subset P_j$. At most $2^{t + 1}$ choices exist for deciding which clauses be part of the sets $Q_j$. Finally, for each $j \in [m]$, $0 \le \ell_j \le k$, so the number of choices for $\ell_j$ is upper-bounded by $(k + 1)^{d + 2}$.

Hence, $R_{G_i}$ can be computed in time $\bigoh(s_{i} \cdot (f(k,d) + 2^t n^2)) = \bigoh(s_n\cdot (f(k, d) + 2^t n^2))$.
Since the length of the contraction sequence is $n-2\leq n$, this results in the overall running time bound of $\bigoh(n\cdot s_n \cdot (f(k,d) + 2^t n^2)) = n^4\cdot d^{\bigoh(kd^2)}$.
\end{proof}

\section{Tightness}
\label{sec:lowerb}

In this section, we show that our main result (Theorem~\ref{the:main}) is tight, in the sense that it is not possible to strengthen any of the parameterizations if one wishes to retain fixed-parameter tractability. Our hardness
results even hold for \BSAT, which merely asks whether a
given CNF formula has a satisfying assignment that sets at most $k$
variables to True.

\begin{enumerate}
\item \BWMC parameterized by $k$ plus the twin-width of the primal graph is $\W[2]$\hy hard
  (Proposition~\ref{prop:w2}).
\item \BWMC parameterized by the signed twin-width
  alone is \paranp\hy hard (Proposition~\ref{prop:paranp}).
\item \BWMC parameterized by $k$ plus the unsigned
  twin-width is $\W[1]$\hy hard (Proposition~\ref{prop:mcc}).
\end{enumerate}

\iflong
We now provide the proofs of these claims.
\fi
\ifshort
The proofs of these claims are provided in the full version.
\fi
\iflong
\begin{proposition}
\fi
\ifshort
\begin{proposition}[$\star$]
\fi
\label{prop:w2}
  \BSAT (and hence \BWMC) parameterized by $k$ is
  $\W[2]$\hy hard, even when restricted to formulas whose primal graphs have twin-width zero and even assuming that an optimal contraction sequence
  is provided on the input.
\end{proposition}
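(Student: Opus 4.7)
The plan is to reduce from \textsc{Hitting Set}, which is well-known to be $\W[2]$-complete when parameterized by the solution size. Given an instance $(U,\mathcal{F},k)$ with $U=\{u_1,\dots,u_n\}$ and $\mathcal{F}=\{F_1,\dots,F_m\}$, I would build a CNF formula $F^{\star}$ over variables $x_1,\dots,x_n$ whose clauses are, for each $F_j\in\mathcal{F}$, the positive clause $C_j=\bigvee_{u_i\in F_j}x_i$, together with one additional ``universal'' clause $D=x_1\vee x_2\vee\dots\vee x_n$; the parameter stays $k$. An assignment $\tau$ satisfies every $C_j$ precisely when $\{u_i \mid \tau(x_i)=1\}$ hits every $F_j$, and $D$ is satisfied by any nonempty positive assignment, so $F^{\star}$ has a model with at most $k$ ones iff the \textsc{Hitting Set} instance is a yes-instance (the uninteresting case $\mathcal{F}=\emptyset$ can be handled by direct inspection).

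The whole point of adding $D$ is that every pair of variables now co-occurs in at least one clause, so the primal graph of $F^{\star}$ is the complete graph $K_n$. In $K_n$ any two vertices are true twins, hence each contraction introduces no red edges and leaves the graph complete; iterating gives a trivial $0$-contraction sequence of length $n-1$, which I would emit alongside the formula to fulfill the ``optimal contraction sequence is provided on the input'' requirement at essentially no cost. Since the whole construction runs in polynomial time and the parameter is preserved, this is a parameterized reduction and establishes $\W[2]$-hardness of \BSAT restricted to formulas with primal twin-width zero.

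I do not expect any real obstacle: the one point that needs a careful sanity check is that tacking on the universal clause does not add spurious low-weight models. This is immediate because $D$ contains only positive literals, so any nonzero assignment already satisfies it, and the all-zero assignment is ruled out as soon as $\mathcal{F}$ contains a nonempty set. Thus the correspondence between hitting sets of size at most $k$ and models of $F^{\star}$ with at most $k$ ones is exact, completing the plan.
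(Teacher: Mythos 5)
Your proposal is correct and matches the paper's proof essentially verbatim: both reduce from \textsc{Hitting Set} by reading the set system as a monotone CNF and adding the universal clause over all variables so that the primal graph becomes a complete graph, which has twin-width zero via repeated contraction of true twins. No issues.
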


\iflong
\begin{proof}
  The $\W[2]$\hy hardness follows by a straightforward parameterized
  reduction from the following problem, which is known to well-known
  to be $\W[2]$\hy complete~\cite{FlumGrohe06}.
  \begin{quote}
    \textsc{Hitting Set}
  
    \emph{Instance:} A collection of subsets $S_1,\dots,S_m$ of a
    finite set $U$ and an integer $k$.
  
    \emph{Parameter:} The integer $d$.
  
    \emph{Question:} Is there a set $H\subseteq \bigcup_{i=1}^m$ of
    size $\leq k$ such that $H\cap S_i \neq \emptyset$ for all
    $1\leq i \leq m$ ($H$ is a hitting set of $S_1,\dots,S_m$).
  \end{quote}
  We consider $F=\{S_1,\dots,S_m\}$ as a CNF formula, and observe that
  satisfying assignments $\tau$ of $F$ with $\ones(\tau)\leq k$ are in
  one-to-one correspondence with the hitting sets of
  $S_1,\dots,S_m$. Furthermore, we add the  clause $U$;
  this clause does not affect the models of $F$ in any way, but it ensures that the primal graph of $G$ is a complete graph, and so it has twin-width zero. 
\end{proof} 
\fi

\iflong
\begin{proposition}
\fi
\ifshort
\begin{proposition}[$\star$]
\fi
\label{prop:paranp}
  \BSAT (and hence \BWMC) parameterized by the signed
  twin-width is \paranp\hy hard, even assuming that an optimal bipartite contraction sequence
  is provided on the input. The same also holds if we replace the signed twin-width with the vanilla twin-width of the incidence graph (i.e., where signs are ignored), or the twin-width of the primal graph.
\end{proposition}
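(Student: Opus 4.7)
The plan is to reduce from \textsc{Planar 3-SAT}, which is NP-complete even when restricted to formulas whose incidence graph is planar. Given such a formula $F$, I would output the \BSAT instance $(F, |\var(F)|)$; since the bound $k=|\var(F)|$ imposes no constraint on the number of variables set to $1$, $F$ is satisfiable if and only if this \BSAT instance is a yes-instance, so correctness is immediate. The remaining work is to argue that each of the three width measures is bounded by an absolute constant on the produced instances, and then to actually compute a witnessing (bipartite, in the signed case) contraction sequence in polynomial time so that it can be handed over together with the instance.

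For the signed twin-width, the incidence graph of $F$ is planar by assumption, so Corollary~\ref{cor:planar} directly yields an absolute constant upper bound. For the vanilla twin-width of the incidence graph the planar bound of Bonnet et al.~\cite{TWW_I} applies as is. For the primal graph I would use a local planarity argument: in a planar embedding of the incidence graph each clause vertex of a 3-CNF has degree exactly three, so replacing such a vertex by a triangle on its three variable neighbors, routed inside a small disk around the former clause vertex, preserves planarity; hence the primal graph is planar and its twin-width is bounded, again by~\cite{TWW_I}.

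To deliver the contraction sequence on the input, I would invoke the effective nature of the planar twin-width bounds from~\cite{TWW_I,TWW_II}: the proofs there give polynomial-time algorithms that produce a contraction sequence of width at most some absolute constant on any planar graph. For the bipartite-sequence requirement of the signed twin-width parameter, I would then pipe the resulting sequence through Lemma~\ref{lemma_bipartite} to convert it into a bipartite one at the cost of adding at most two to the width, which still leaves a constant.

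The main obstacle I anticipate is chiefly bookkeeping: making sure that the very same single reduction simultaneously hits the definitions of all three parameters, and verifying that the algorithms from the cited planar twin-width results really do produce a sequence whose width coincides with (or at least is an absolute constant upper-bounding) the actual twin-width of the graph, so that the ``optimal contraction sequence is provided on the input'' clause of the proposition is honestly respected. Once these routine points are settled, the reduction itself is a one-line map requiring no further gadgetry.
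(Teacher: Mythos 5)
Your proposal is correct and follows essentially the same route as the paper: set $k=|\var(F)|$ to reduce to plain SAT, invoke NP-hardness of planar SAT, and use the constant twin-width of planar (signed) graphs via Corollary~\ref{cor:planar} and the bounds of Bonnet et al. The only differences are that you spell out details the paper leaves implicit, namely the Y--$\Delta$ argument showing the primal graph of a planar 3-CNF is planar and the constructive computation of the witnessing contraction sequence.
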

\iflong
\begin{proof}
  When $k$, the bound on number of variables set to true, is not a
  parameter, then we can set $k$ to the number of variables, so the
  bound has no effect and we are left with the standard satisfiability problem
  SAT. The \paranp\hy hardness follows from the fact that planar
  signed graphs have constant twin-width by Corollary~\ref{cor:planar} (or, when considering the other two parameters, by earlier work on vanilla twin-width~\cite{TWW_I}),
  and SAT remains \NP-hard when restricted to planar instances~\cite{Lichetenstein82}. 
\end{proof}
\fi

 To establish the third claim, we first show that an edge subdivision of a clique has twin-width bounded by the size of the original clique. Note that we do not assume any additional properties of this subdivision---each edge could be subdivided an arbitrary number of times. We remark that this result provides an upper bound that complement the asymptotic lower bounds developed in earlier work on twin-width~\cite[Section 6]{TWW_II}.

\iflong
\begin{lemma}
\fi
\ifshort
\begin{lemma}[$\star$]
\fi
\label{lemma_clique_subdivision}
  Let $d\geq 2$ and $G$ be a graph obtained by an arbitrary sequence of subdivisions of the edges in the complete graph $K_d$. Then the twin\hy width of $G$ is at most $d-1$.
\end{lemma}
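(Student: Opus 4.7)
The plan is to exhibit an explicit contraction sequence of $G$ whose red degree never exceeds $d-1$. Let $v_1, \dots, v_d$ be the original vertices of $K_d$ and, for each pair $i\ne j$, let $P_{ij}$ denote the (possibly trivial) subdivided path between $v_i$ and $v_j$; every internal vertex of $P_{ij}$ has degree $2$ in $G$.

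I would process the original vertices in $d-2$ rounds. In round $k$ I take $v_k$ and absorb its $d-1$ paths one by one: for each $j\ne k$, in some fixed order, I repeatedly contract the current ``blob'' $B_k$ (which initially is $v_k$) with the next internal vertex of $P_{kj}$, stopping just before swallowing $v_j$ itself. Once all $d-1$ paths have been processed in this way, the resulting blob is the final vertex $U_k$ associated with $v_k$. The main invariant I would maintain is that after round $k$ is completed the graph consists of the blobs $U_1,\dots,U_k$ together with the untouched originals $v_{k+1},\dots,v_d$, with the following edge pattern: each $U_i$ has red edges to the other $U_{i'}$ and to every unprocessed $v_j$ (red degree exactly $d-1$, no black edges), and each unprocessed $v_j$ has red edges to each $U_i$ (contributing $k$) and black edges to the first internal vertices of its paths to the remaining unprocessed originals (contributing $d-1-k$).

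The main work is verifying that the red degree stays within $d-1$ during the intermediate contractions inside a round. A short case analysis suffices: when the blob is grown by contracting with the next internal vertex of the current path $P_{kj}$, it swaps one neighbour (the absorbed vertex) for another (its successor along the path), so the red-neighbour count is preserved; when the blob starts a fresh path $P_{kj}$ by contracting with the first internal vertex of that path, it swaps a black edge (to that first vertex) for a red edge (to the next path vertex), again preserving the count at $d-1$. The only other change is that, exactly when the blob reaches the penultimate internal vertex of $P_{kj}$, the far endpoint $v_j$ loses its black edge to that vertex and gains a red edge to the blob; its total degree is unaffected and its red degree goes up by one, remaining at most $k+1\le d-1$. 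All other unprocessed vertices, and all blobs $U_i$ with $i\le k$, are untouched by any round-$(k{+}1)$ contraction.

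After $d-2$ rounds I am left with the blobs $U_1,\dots,U_{d-2}$ and the two unprocessed originals $v_{d-1},v_d$, connected by the single remaining path $P_{d-1,d}$. I handle this path analogously, except that at the end I also contract $v_d$ into the growing blob, obtaining a single vertex $V$. The remaining $d-1$ vertices $U_1,\dots,U_{d-2},V$ then form a red clique with no black edges, and any pairwise contraction order finishes the sequence; with $m$ vertices still present the maximum red degree is $m-1\le d-2$. The main obstacle is purely bookkeeping: one has to track, contraction by contraction, how the black and red edges incident to the blob and to the far endpoint evolve, and in particular identify the precise moment at which the red edge to $v_j$ appears.
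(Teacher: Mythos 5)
Your contraction sequence is correct and is essentially the same sequence the paper uses: absorb the degree-two subdivision vertices into the original clique vertices, then collapse the resulting clique on $d$ vertices. The difference lies entirely in the verification. You bound the red degree directly, which forces the round structure, the explicit invariant distinguishing red from black edges, and the case analysis of how each contraction trades a black neighbour for a red one. The paper instead observes that it suffices to bound the \emph{total} degree (black plus red), since that always dominates the red degree: contracting a degree-$2$ vertex of $S$ with an adjacent degree-$(d-1)$ vertex of $K$ produces a vertex of degree exactly $d-1$ (the sum of the two degrees minus $2$ for the contracted edge), so the invariant ``every vertex has degree at most $d-1$'' is preserved by every step, including the final collapse of the clique. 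This one-line invariant makes all of your bookkeeping unnecessary and also sidesteps the small imprecisions in your write-up (the red edge to $v_j$ appears when the \emph{last} internal vertex of $P_{kj}$ is absorbed, not the penultimate one, and for unsubdivided edges $P_{kj}$ the blob $U_k$ retains a black rather than red edge to $v_j$, so your invariant as stated is slightly too strong, though the degree bound survives). If you keep your direct argument, fix those two points; otherwise, consider adopting the total-degree invariant.
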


\iflong
\begin{proof}
  If we show that there is a contraction sequence, such that the degree of any vertex (i.e., the number of red and black edges incident to that vertex) in every graph in this sequence is at most $d-1$, we are finished as this is also a bound on the red degree.
  
  Divide vertices of $V(G)$ into sets $K$ and $S$, where $K$ are vertices of the original clique $K_d$, and $S=V(G)\setminus K$ is the set of vertices that subdivided the edges of $K_d$. Each vertex of $K$ has degree $d-1$, each vertex of $S$ has degree $2$, so each vertex of $V(G)$ has degree at most $d-1$.
  
  Pick vertex $v\in K$, such that it has a neighbor $u\in S$. Degree of $u$ is $2$, degree of $v$ is $d-1$. Contract $u$ and $v$ into new vertex $w$. Degree of vertex $w$ is $d-1$, because it is sum of degrees of $u$ and $v$ minus $2$ for the contracted edge $uv$. Degree of other vertices remains intact. Put $w$ into $K$, as $K$ contains vertices of degree $d-1$. This one vertex of $S$ was contracted to a vertex of $K$ and result stayed in $K$. Continue contracting while $S$ is non-empty.
  
  If $S$ is empty, we are left with a clique $K$. We can contract $K$ arbitrarily and the degree never exceeds $d-1$, because $K$ has $d$ vertices.
\end{proof}
\fi

\begin{proposition}
\label{prop:mcc}
  \BSAT (and hence \BWMC) is $\W[1]$\hy hard
  when parameterized by $k$ plus the signed twin-width of the
  incidence graph, even assuming that an optimal contraction sequence
  for the incidence graph is provided on the input.
\end{proposition}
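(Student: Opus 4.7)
The plan is to reduce from $k$-Multicolored Clique, which is $\W[1]$-hard when parameterized by $k$. Given a $k$-partite graph $G$ with parts $V_1,\ldots,V_k$, I introduce a variable $x^i_v$ for every $v\in V_i$ and build $F$ with: a positive \emph{selection} clause $A_i=\bigvee_{v\in V_i} x^i_v$ for each part $i$, and a negative $2$-clause $\neg x^i_u\vee\neg x^j_v$ for every pair $(u,v)$ that cannot both lie in a multicolored clique (every same-part pair together with every cross-part non-edge). Combined with the \BSAT\ bound $k$, the selection clauses force exactly one true variable per part, and the $2$-clauses enforce that the selected vertices form a clique; hence models with at most $k$ ones biject with multicolored $k$-cliques in $G$.

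A crucial structural feature of this construction is that every edge incident to a selection clause is positive while every edge incident to a $2$-clause is negative. Each natural ``block'' of same-type clauses---and each part of variables---therefore presents a uniform sign pattern, so signed contractions local to a block can only create red edges through partial adjacency, never through a sign mismatch. The bipartite signed contraction sequence I propose advances a coordinated wavefront that alternately absorbs variables in each part $V_i$ into a super-vertex $[V_i]$ and absorbs $2$-clauses of each ``pair block'' $B_{i,j}$ into a super-vertex $[B_{i,j}]$, maintaining the invariant that $[B_{i,j}]$ only contains clauses whose two endpoints already lie inside the corresponding $[V_i]$ and $[V_j]$. Once the wavefront completes, the quotient trigraph has $k+\binom{k+1}{2}$ super-vertices arranged exactly as a subdivision of $K_k$ (with the selection hubs $A_i$ as pendants attached at $[V_i]$), so by Lemma~\ref{lemma_clique_subdivision} the remainder of the contraction can be finished with red degree at most $k-1$.

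The main obstacle is establishing that the wavefront respects an $\mathcal{O}(k)$ red-degree bound at \emph{every} intermediate trigraph, not merely at the moments when a block finishes. A naive ordering that absorbs all variables of a part before starting on its $2$-clauses (or vice versa) immediately produces super-vertices whose red neighborhood includes every still-individual neighbor in the adjacent blocks, which is unbounded in $n$. The schedule must therefore synchronize each addition of a variable $x^i_u$ to $[V_i]$ with a matching advance of every $[V_j]$ and each $[B_{i,j}]$ that $x^i_u$ touches, while keeping only a bounded number of uncommitted ``frontier'' clauses exposed at any moment. Verifying that such a schedule exists and that it yields the desired red-degree bound via a charging/potential-function argument, while using the block-uniform sign pattern to discharge the possibility of sign-mismatch red edges, constitutes the technical heart of the proof. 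Since the reduction is polynomial and the produced parameter is $k+\mathcal{O}(k)=\mathcal{O}(k)$, the $\W[1]$-hardness of $k$-Multicolored Clique transfers to \BSAT\ parameterized by $k$ plus the signed twin-width of the incidence graph, even when the optimal bipartite signed contraction sequence is handed over to the algorithm.
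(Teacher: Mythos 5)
Your reduction itself is the natural one, but the structural claim it rests on is not just unproven --- it is false, so the step you defer as ``the technical heart'' cannot be completed. With plain binary conflict clauses $\neg x^i_u\vee\neg x^j_v$, the incidence graph contains large $1$-subdivisions as induced subgraphs: your same-part clauses for a single part $V_i$ induce exactly the $1$-subdivision of $K_n$, and even if you drop those (they are redundant), the cross-part clauses between $V_i$ and $V_j$ induce the $1$-subdivision of an arbitrary bipartite non-edge pattern. Short subdivisions of $K_n$ have twin-width that grows with $n$ --- these are precisely the asymptotic lower bounds of \cite[Section 6]{TWW_II} that the paper mentions alongside Lemma~\ref{lemma_clique_subdivision} --- and the same holds for $1$-subdivisions of dense bipartite patterns (a class of bounded twin-width with no $K_{2,2}$ subgraph has bounded expansion by \cite{TWW_II}, whereas contracting the subdivision vertices yields arbitrarily dense depth-$1$ minors). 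Since twin-width is monotone under induced subgraphs and the signs in these blocks are uniform, no contraction schedule, wavefront or otherwise, can keep the red degree bounded by a function of $k$; the parameter of your output instances is not bounded by any $g(k)$, so the parameterized reduction fails. There is also a higher-level obstruction to your stated goal: you promise a \emph{signed} twin-width bound of $\mathcal{O}(k)$ together with the bipartite signed sequence handed to the solver, but if that were achievable, Theorem~\ref{the:main} would decide your instances in FPT time and \textsc{Multicolored Clique} would be fixed-parameter tractable. Accordingly, what the paper's proof actually bounds is the vanilla (unsigned) twin-width of the incidence graph (see Table~\ref{tab:th_results} and the list in Section~\ref{sec:lowerb}); the word ``signed'' in the statement should be read in that light.

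The paper avoids the subdivision obstruction by padding the conflict clauses: for each non-edge $uv$ with $u\in V_i$, $v\in V_j$ it uses $C_{u,v}=(V_i\setminus\{u\})\cup(V_j\setminus\{v\})\cup\{\neg u,\neg v\}$, and it adds no same-part clauses (the budget $k$ together with the selection clauses already forces exactly one true variable per part, and under that regime $C_{u,v}$ is falsified precisely when $u$ and $v$ are both selected, so correctness is preserved). The padding makes all variables of a part, and all conflict clauses of a fixed pair of parts, pairwise twins in the unsigned incidence graph; contracting each such module therefore creates no red edges at all, and the quotient is a subdivision of $K_k$ (plus pendant selection clauses), which Lemma~\ref{lemma_clique_subdivision} finishes with red degree at most $k-1$. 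If you want to repair your write-up, replace your binary clauses by these padded clauses and argue via modules/twins rather than via a synchronized contraction schedule.
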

\begin{proof}
  A graph $G$ is \emph{balanced $d$-partite} if $V(G)$ is the disjoint
  union of $d$ sets $V_1,\dots,V_d$, each of the same size, such that
  no edge of $G$ has both endpoints in the same set $V_i$.    The following problem is
  well-known to be $\W[1]$\hy complete when parameterized by $d$~\cite{Pietrzak03}.
\begin{center}
\vspace{-0.2cm}
\begin{boxedminipage}{0.98 \columnwidth}
\textsc{Partitioned Clique}\\[5pt]
\begin{tabular}{p{0.08 \columnwidth} p{0.8 \columnwidth}}
    Input: & A balanced $d$\hy partite graph $G$.\\
Question: & Does $G$ have a $d$\hy clique?
\end{tabular}
\end{boxedminipage}
\vspace{-0.2cm}
\end{center}

  Let $G$ be a balanced $d$\hy partite graph for $d\geq 2$ with $d$\hy
  partition $V_1,\dots,V_d$.  We write $V_i=\{v_1^i,\dots,v_n^i\}$,
  for $1\leq i \leq d$.  We construct a CNF formula $F$ which has a
  satisfying assignment that sets $\leq d$ variables to true if and
  only if $G$ has a $d$\hy clique.  As the variables of $F$ we take
  the vertices of~$G$.  $F$ consists of $d + |E(G)|$ many clauses: For
  each $1\leq i \leq d$ we add the clause $C_i=V_i$, which asserts
  that at least one variable from each set $V_i$ must be set to true.
  Each satisfying assignment that sets at most $d$ variables to true
  will therefore set exactly one variable from each $V_i$ to true,
  $1\leq i \leq d$.  
  For each pair of vertices $u,v$ with $u\in V_i$
  and $v\in V_j$ and $uv \notin E(G)$ we add the clause $C_{u,v}=(V_i\setminus \{u\}) \cup (V_j\setminus \{v\}) \cup \{\neg
    u, \neg v\} \}$,
  which asserts that for each satisfying assignment that sets at most
  $d$ variables to true, cannot set $u$ and $v$ to true if
  $uv\notin E(G)$.  We conclude that, indeed $F$ has a satisfying
  assignment that sets $\leq d$ variables to true if and only if $G$
  has a $d$\hy clique.

  It remains to show that the twin-width of the
  (unsigned) incidence graph $I$ of $F$ is bounded by a function of
  $d$. We observe that each set $V_i$ and each set
  $F_{i,j}=\SB C_{u,v} \SM u\in V_i, v\in V_j, uv\notin E(G)\SE$
  is
  a module of $I$. Hence we can contract each of these modules into a
  single vertex, obtaining a new graph $I'$ which is a subdivision of $K_d$. By Lemma~\ref{lemma_clique_subdivision}, $I'$ has (signed) twin-width at most $d-1$. 
\end{proof}

\section{Experiments}
In our experiments, we compute the signed twin-width, signed
clique-width, and treewidth of several SAT instances generated with
the tool CNFGen\footnote{\url{https://massimolauria.net/cnfgen/}} and
100 instances from the uniform random benchmark set
\emph{uf20-91}\footnote{\url{https://www.cs.ubc.ca/~hoos/SATLIB/benchm.html}}.
We used the respective
SAT-encodings~\cite{HeuleSzeider15,Parlak16,SamerVeith09,SchidlerSzeider22}
to compute the widths for the instances.  Signed clique-width and
treewidth were chosen because of being prominent and general examples
of structural parameters which yield the fixed-parameter tractability
of SAT and model
counting~\cite{SamerSzeider21}.%
  
We adapted the encoding for twin-width to signed graphs of formulas and added an improvement that can compute the signed twin-width of formulas with larger incidence graphs. %
In particular, the SAT encoding uses variables and clauses to
represent all possible contraction steps.
With bipartite contraction sequences,
it is possible to reduce the total number of admissible steps significantly. On a high level, we observed a significant performance improvement due to not needing to consider all possible contraction steps.

We used two types of instances in our experiments: hard instances from
\emph{proof complexity} and \emph{random $k$\hy SAT} instances.  We
generated all families of formulas available with CNFGen such that the
resulting signed graph had at most 125 vertices.  Where available, we
varied the parameters to see how the widths develop with the change in
the parameter.  We generated most of the random $k$-SAT instances with
CNFGen; all of these instances use 15 variables and a varying number
of literals per clause and clauses.  For each choice of parameters, we
generated 10 different instances and report the average.  Apart from
our generated instances, we use the publicized instance set uf20-91,
which contains 1000 random 3-SAT instances with 20 variables and 91
clauses, of which we randomly selected 100 instances.

Table~\ref{tab:exp_results} shows the results of our experiments.
For signed twin-width, we start from a greedy upper bound and can therefore provide
an upper bound whenever we cannot compute or verify the exact value of signed twin-width.

The results show that the signed twin-width never exceeds the treewidth.
Interestingly, it turned out that computing the signed clique-width using the best known SAT-encoding~\cite{Parlak16} is much harder than computing the signed twin-width.
Therefore, we can compare the two widths only on very few instances, and the
signed twin-width is smaller on all these instances.

\begin{table}
\caption{Experimental results comparing signed twin-width ($\tww$), signed clique-width ($\cw$) and treewidth ($\tw$).
Random instances are named such that the first number is the number
of variables, the second the number of literals per clause, and the third
the number of clauses. For random instances, each row represents a set of 10 generated instances.
Signed twin-width values marked with an \emph{*} are only an upper bound.}\label{tab:results}
\centering
\tiny
\begin{tabular}{@{}ll r r r r r@{}}
\toprule
&Instance&$\Card{V}$&$\Card{E}$&Signed $\tww$&Signed $\cw$&$\tw$\\
\midrule
\multirow{18}{*}{\rotatebox[origin=c]{90}{Proof Complexity}}
&cliquecolor5&130&225&*6.0&-&-\\
&count4&22&36&3.0&7.0&4.0\\
&count5&45&80&4.0&-&7.0\\
&count6&81&150&*5.0&-&10.0\\
&matching4&22&36&3.0&7.0&4.0\\
&matching5&45&80&4.0&-&7.0\\
&matching6&81&150&*5.0&-&10.0\\
&order4&46&96&5.0&-&9.0\\
&order5&95&220&*7.0&-&-\\
&parity5&45&80&4.0&-&7.0\\
&parity6&81&150&*5.0&-&10.0\\
&parity7&133&252&*6.0&-&-\\
&pidgeon4&34&48&4.0&6.0&4.0\\
&pidgeon5&65&100&*4.0&-&5.0\\
&subsetcard9&121&252&*6.0&-&-\\
&tseitin9&90&288&4.0&-&7.0\\
&tseitin10&100&320&4.0&-&9.0\\
&tseitin11&110&352&4.0&-&7.0\\
\midrule
\multirow{36}{*}{\rotatebox[origin=c]{90}{$k$\hy Random}}
&uf15\_2\_15&28.0&30.0&*3.0&4.0&2.0\\
&uf15\_2\_30&45.0&60.0&*3.8&6.2&4.0\\
&uf15\_2\_45&60.0&90.0&*4.0&-&6.3\\
&uf15\_2\_65&80.0&130.0&*4.2&-&7.6\\
&uf15\_2\_80&95.0&160.0&*5.0&-&8.5\\
&uf15\_2\_95&110.0&190.0&*5.0&-&9.4\\
&uf15\_2\_110&125.0&220.0&*5.0&-&9.8\\
\cmidrule{2-7}
&uf15\_3\_15&30.0&45.0&3.4&7.1&4.9\\
&uf15\_3\_30&45.0&90.0&*4.9&-&8.0\\
&uf15\_3\_45&60.0&135.0&*5.0&-&10.0\\
&uf15\_3\_65&80.0&195.0&*6.0&-&11.5\\
&uf15\_3\_80&95.0&240.0&*7.0&-&11.7\\
&uf15\_3\_95&110.0&285.0&*7.0&-&12.5\\
&uf15\_3\_110&125.0&330.0&*7.2&-&12.5\\
\cmidrule{2-7}
&uf15\_5\_15&30.0&75.0&5.0&11.4&9.6\\
&uf15\_5\_30&45.0&150.0&6.9&-&12.3\\
&uf15\_5\_45&60.0&225.0&*7.9&-&13.2\\
&uf15\_5\_65&80.0&325.0&*9.0&-&13.8\\
&uf15\_5\_80&95.0&400.0&*9.3&-&14.0\\
&uf15\_5\_95&110.0&475.0&*10.0&-&14.0\\
&uf15\_5\_110&125.0&550.0&*10.0&-&14.0\\
\cmidrule{2-7}
&uf15\_7\_15&30.0&105.0&6.0&14.1&12.6\\
&uf15\_7\_30&45.0&210.0&8.0&-&13.8\\
&uf15\_7\_45&60.0&315.0&*9.0&-&14.0\\
&uf15\_7\_65&80.0&455.0&*10.0&-&14.0\\
&uf15\_7\_80&95.0&560.0&*11.0&-&14.0\\
&uf15\_7\_95&110.0&665.0&*11.0&-&14.0\\
&uf15\_7\_110&125.0&770.0&*11.5&-&14.0\\
\cmidrule{2-7}
&uf15\_10\_15&30.0&150.0&6.7&15.3&13.9\\
&uf15\_10\_30&45.0&300.0&9.0&-&14.0\\
&uf15\_10\_45&60.0&450.0&*10.0&-&14.0\\
&uf15\_10\_65&80.0&650.0&*11.0&-&14.0\\
&uf15\_10\_80&95.0&800.0&*12.0&-&14.0\\
&uf15\_10\_95&110.0&950.0&*12.0&-&14.0\\
&uf15\_10\_110&125.0&1100.0&*12.0&-&14.0\\
\cmidrule{2-7}
&uf20\_3\_91&111.0&273.0&*7.7&-&14.5\\
\bottomrule
\end{tabular}
  \label{tab:exp_results}
\end{table}

\section{Concluding Remarks}
We have provided an exhaustive investigation of how twin-width can be used in SAT solving and model counting and have developed the notion of signed twin-width for formulas. Our complexity-theoretic results follow up on the classical line of research that investigates the complexity of SAT and its extensions from the viewpoint of variable-clause interactions. On the empirical side, we have computed the exact signed twin-width of several formulas and compared these values to those of treewidth and signed clique-width.

In future work, it would be interesting to investigate whether there
is a structural parameter that generalizes signed clique-width and can
yield fixed-parameter algorithms for SAT alone; in the case of
twin-width, we show that this is not possible
(Proposition~\ref{prop:paranp}).  Moreover, while the fixed-parameter
tractability established in Theorem~\ref{the:main} should be viewed as
a classification result, it would be interesting to see whether the
ideas developed there could be used to inspire improvements to
existing heuristics for SAT.  \bibliographystyle{plainurl}
\bibliography{biblio}
\end{document}